\numberwithin{equation}{section}
\theoremstyle{plain}
\newtheorem{thm}{Theorem}[section]
\newtheorem{lemma}{Lemma}[section]
\newtheorem{corollary}{Corollary}[section]
\theoremstyle{remark}
\newtheorem{remark}{Remark}[section]
\newcommand{\plim}{\mathop{\mathrm{plim}}}
\newcommand{\amax}{\mathop{\mathrm{argmax}}}
\newcommand{\bgm}{\gamma}
\newcommand{\bep}{\bm{\epsilon}}
\newcommand{\bbe}{\bm{\beta}}
\newcommand{\BF}{\mathrm{BF}}
\newcommand{\gvf}{_{\gamma:F}}
\newcommand{\BIC}{[\mathrm{BIC}]}
\newcommand{\mcM}{\mathcal{M}}
\newcommand{\PR}{\mathrm{Pr}}
\newcommand{\sss}{SS }
\newcommand{\pigm}{\pi}
\newcommand{\pial}{\pi}
\newcommand{\piet}{\pi}
\newcommand{\pibeet}{\pi}
\newcommand{\pig}{\pi}
\def\citeapos#1{\citeauthor{#1}'s (\citeyear{#1})}
\begin{document}

\begin{frontmatter}
\title{Robust Bayesian variable selection with sub-harmonic priors}
\runtitle{Bayes factors}

\begin{aug}
\author{\fnms{Yuzo} \snm{Maruyama}
\thanksref{t1,m1}\ead[label=e1]{maruyama@csis.u-tokyo.ac.jp}}
\and
\author{\fnms{William, E.} \snm{Strawderman}
\thanksref{t2,m2}\ead[label=e2]{straw@stat.rutgers.edu}}
\thankstext{t1}{This work was partially supported by KAKENHI \#21740065 \& \#23740067.}
\thankstext{t2}{This work was partially supported by a grant from the Simons
Foundation (\#209035 to William Strawderman).}

\address{University of Tokyo\thanksmark{m1} and Rutgers University\thanksmark{m2} \\
\printead{e1,e2}}

\runauthor{Y. Maruyama and W. Strawderman}

\end{aug}

\begin{abstract}
This paper studies Bayesian variable selection in linear models with general spherically symmetric error distributions. 
We propose sub-harmonic priors which arise as a class of mixtures of Zellner's $g$-priors for which the Bayes factors are independent of the underlying error distribution, as long as it is in the spherically symmetric class. 
Because of this invariance to spherically symmetric error distribution, we refer to our method as a robust Bayesian variable selection method.
We demonstrate that our Bayes factors have model selection consistency and are coherent. 
We also develop Laplace approximations to Bayes factors for a number of recently studied mixtures of $g$-priors that have recently appeared in the literature (including our own) for Gaussian errors.
These approximations, in each case, are given by the Gaussian Bayes factor based on BIC times a simple rational function of the prior's hyper-parameters and the $R^2$'s for the respective models. 
We also extend model selection consistency for several $g$-prior based Bayes factor methods for Gaussian errors to the entire class of spherically symmetric error distributions.
Additionally we demonstrate that our class of sub-harmonic priors are the only ones within a large class of mixtures of $g$-priors studied in the literature which are robust in our sense.
A simulation study and an analysis of two real data sets indicates good performance of our robust Bayes factors relative to BIC and to other mixture of $g$-prior based methods.
\end{abstract}

\begin{keyword}[class=AMS]
\kwd[Primary ]{62F15}
\kwd{62F07}
\kwd[; secondary ]{62A10}
\end{keyword}

\begin{keyword}
\kwd{Bayes factor}
\kwd{Bayesian variable selection}
\kwd{fully Bayes method}
\kwd{model selection consistency}
\kwd{sub-harmonic prior}
\end{keyword}
\end{frontmatter}

\section{Introduction}
\label{sec:intro}
Suppose the linear regression model is used to relate $Y$ to the
$p$ potential predictors $x_1, \dots, x_p$,
\begin{equation} \label{full-model}
\bm{y} = \alpha \bm{1}_n+\bm{X}_F\bbe_F + \sigma_F\bep_F ,
\end{equation}
 where the subscript $F$ refers to the full model $\mcM_F$.
In the model \eqref{full-model}, $\alpha$ is an unknown intercept parameter,
$\bm{1}_n$ is an $n\times 1$ vector of ones,
$\bm{X}_F=(\bm{x}_1,\dots, \bm{x}_p)$ is an $n \times p$ design matrix,
and $\bbe_F$ is a $p \times 1$ vector of unknown
regression coefficients.
In the error term of \eqref{full-model}, $\sigma_F$ is an unknown scalar
and $\bep_F $ has a spherically symmetric (SS) distribution with the density
$f_n(\|\bep_F\|^2)$, $E[\bep_F]=\bm{0}_n$ and $\mathrm{Var}[\bep_F]=\bm{I}_n$.
We assume that the columns of $\bm{X}_F$ have been standardized so that for 
$1 \leq i \leq p$, $\bm{x}'_i\bm{1}_n = 0$ without loss of generality.

We shall be particularly interested in the variable selection 
problem where we would like to select an unknown
subset of the effective predictors.
It will be convenient throughout to index each of these $2^p$ possible
subset choices by the vector
\begin{equation*}
\bm{\gamma} =(\gamma_1, \dots,\gamma_p)'
\end{equation*}
where $\gamma_i=0$ or $1$. 
We use $q_{\bgm}=\bm{\gamma}'\bm{1}_p$ to denote the size of
the $\bm{\gamma}$th subset.
The problem then becomes that of selecting a submodel of \eqref{full-model}
\begin{equation}\label{submodel-gamma}
\bm{y} = \alpha \bm{1}_n+\bm{X}_{\bgm} \bbe_{\bgm} + \sigma_{\bgm}\bep_{\bgm}.
\end{equation}
In \eqref{submodel-gamma}, $\bm{X}_{\bgm}$ is the $n \times q_{\bgm}$ 
matrix whose columns 
correspond to the $\bm{\gamma}$th subset of $x_1,\dots,x_p$, $\bbe_{\bgm}$
is a $q_{\bgm} \times 1$ vector of unknown regression coefficients.
Let $\mcM_{\bgm}$ denote the submodel given by \eqref{submodel-gamma}. 
We assume the error term $\bep_{\bgm} $ has 
the \sss density
\begin{equation}\label{bep_sim_f}
 \bep_{\bgm}\sim f_n(\|\bep_\bgm\|^2)
\end{equation}
for all $\bm{\gamma}$, 
with $E[\bep_{\bgm}]=\bm{0}_n$ and $\mathrm{Var}[\bep_{\bgm}]=\bm{I}_n$.
Further $\sigma_{\bgm}$ is an unknown scalar in the error term.
We note that, in most earlier studies, the error terms 
in linear models have been assumed to have a Gaussian distribution, 
e.g.~as in \cite{George-Foster-2000} and \cite{Liang-etal-2008}. 
There are a number of reasons to broaden the class of error distributions
from the Gaussian to the general \sss family, chief among them being
that it allows error distributions with flatter tails than
the normal. Among the commonly assumed distributions, the class of
multivariate-$t$ distributions given by
\begin{equation*}
 f_n(\|\bep_\bgm\|^2)=\frac{\Gamma(\{l+n\}/2)}{(l\pi)^{n/2}\Gamma(l/2)}
\left(1+\|\bep_\bgm\|^2/l\right)^{-(l+n)/2}
\end{equation*}
with degrees of freedom $l$, is the best known and most useful.
More generally, the class of scale mixtures of normals allows, by 
De Finetti's theorem, a huge class of exchangeable error distributions
with flatter tails than the normal. However the class of \sss distributions
is much broader.
While the class of multivariate-$t$ distributions or scale mixtures of
normals may be the most interesting extensions to the normal class,
our results will hold for the entire class of \sss distributions.
Note that for all such distributions the coordinates are uncorrelated, but,
except for the normal distribution they are dependent.
Standard references for linear models with \sss distributions
include \cite{Anderson-2003} and \cite{Fang-Zhang-1990} among others.

In this paper, we assume that $n > p+1$ (the so called classical setup)
and $\{\bm{x}_1,\dots,\bm{x}_p\}$ are linearly independent, 
which implies that 
\begin{equation}
\mathrm{rank}\ \bm{X}_F=p, 
\quad \mathrm{rank}\ \bm{X}_{\bgm}=q_{\bgm}.
\end{equation}
We also assume in much of the paper 
that the null model $\mcM_N$ ($q_{\bgm}=0$ or $ \bm{\gamma}=(0,\dots,0)'$)
is not a possible model, that is, the number of possible models is $2^p-1$, 
rather than, $2^p$.
Some reviewers of a previous version of this paper have objected on philosophical
grounds to the exclusion of the null model.
While we agree that in some studies a fundamental scientific
issue is to decide if there are any significant effects, in many other
cases that particular issue has been addressed and the issue is to
select those variables that have the most important effects.
Thus for most of the paper we will assume that the null model is not a possible choice.
However, in Section \ref{sec:null}, we give a slightly modified 
development that allows positive prior probability on all submodels
including the null model.
All the results in the paper remain true under this modification.

A Bayesian approach to this problem entails the specification of 
prior distributions on the models $\mathrm{Pr}(\mcM_{\bgm})$, 
and on the parameters  
$\alpha,\bbe_{\bgm}, \sigma_{\bgm}$ of each model.  
For each such specification, 
of key interest is the posterior probability of $\mcM_{\bgm}$
given $\bm{y}$,
\begin{equation} \label{posterior-2}
 \mathrm{Pr}(\mcM_{\bgm}|\bm{y})
=\frac
{\PR(\mcM_{\bgm}) m_{\bgm}(\bm{y})}
{\sum_{\bgm} \PR(\mcM_{\bgm}) m_{\bgm}(\bm{y})}
=\frac
{\PR(\mcM_{\bgm}) \BF\gvf}
{\sum_{\bgm} \PR(\mcM_{\bgm}) \BF\gvf},
\end{equation}
where $\mathrm{Pr}(\mcM_N)=0$ is assumed as mentioned in the above.
In \eqref{posterior-2}, $m_{\bgm}(\bm{y})$ is the marginal density under
$\mcM_{\bgm}$ and 
$\BF\gvf$ is the Bayes factor for comparing
each of $\mcM_{\bgm}$ to the full model $\mcM_F$
which is defined as
\begin{equation*}
 \BF\gvf
= \frac{m_{\bgm}(\bm{y})}{m_F(\bm{y})},
\end{equation*}
where $m_F(\bm{y})$ is the marginal density under the full model.
In Bayesian model selection,
\begin{equation} \label{eq:relation_m_BF}
\amax_{\bgm} \mathrm{Pr}(\mcM_{\bgm}|\bm{y})=\amax_{\bgm} \PR(\mcM_{\bgm}) \BF\gvf,
\end{equation}
is typically selected as the best model.

In this paper, the main focus is on $ \BF\gvf $, not $\PR(\mcM_{\bgm}) $.
Hence our main aim is to propose and study specifications for
the prior distribution of the parameters for
each submodel $\mcM_{\bgm}$.
We have four main goals, three of which are motivated by results
in \cite{Maru-Straw-2005} on minimax generalized Bayes (GB) ridge
regression estimators. \cite{Maru-Straw-2005} give a class of
separable priors that lead to GB estimators which do not depend on
the particular underlying \sss error distribution and are simultaneously
minimax for all such distributions with finite second moment.
Within this class, a subclass was found such that the GB estimators had
simple analytical forms.

One corresponding goal in this paper is to find conditions on the prior
distributions that lead to robustness of the Bayesian variable selection
procedure in the sense that the Bayes factors are independent of the 
particular \sss error distribution, these conditions turn out to be
separability condition as in \cite{Maru-Straw-2005}.

The second is to find, among this separable class of
priors, a subclass for which the corresponding GB estimators in each
submodel are admissible and (at least nearly) minimax over the
entire \sss class. This is achieved by choosing the ``sub-harmonic priors''
on $\bbe_{\bgm}$ described in Section \ref{sec:prior}.
In particular, the joint density we consider for $\mcM_\bgm $ has the form
\begin{equation} \label{intro-prior}
\pigm(\alpha,\bm{\theta}_{\bgm}, \sigma_{\bgm}|\nu)
\propto 
\sigma_{\bgm}^{-\nu-1} \|\bm{\theta}_{\bgm} \|^{-q_{\bgm}+\nu}
\end{equation}
for 
$\bm{\theta}_{\bgm}=(\theta_1,\dots,\theta_{q_{\bgm}})'=(\bm{X}'_{\bgm} \bm{X}_{\bgm})^{1/2}\bbe_{\bgm}$ 
and a non-random hyper-parameter $\nu$ with $0< \nu< q_{\bgm}$. 
Since the term including $\bm{\theta}_{\bgm}$ in the prior above, 
$ \|\bm{\theta}_{\bgm} \|^{-q_{\bgm}+\nu}$ for $0<\nu<\min(2,q_{\bgm})$,
is known as a sub-harmonic function, that is,
\begin{equation*}
\textstyle{\sum_{i=1}^{q_{\bgm}} \left\{\partial^2/\partial \theta_i^2\right\}
\|\bm{\theta}_{\bgm} \|^{-q_{\bgm}+\nu}>0},
\end{equation*}
we call the prior given by \eqref{intro-prior} a sub-harmonic prior.
We will show that such priors lead to robust Bayes factors, in the sense that
each Bayes factor does not depend on the form of the underlying error 
distribution. 
We indicate that the corresponding generalized Bayes estimators for each
submodel are admissible and ``nearly'' minimax.

The third goal is to find a class of priors for which the Bayes factors
have a tractable form. The class of mixtures of $g$-priors is a large
class which leads to tractable Bayes factors in the Gaussian case.
We demonstrate that our sub-harmonic priors are unique among the
class of $g$-mixture priors studied in the literature that are robust in
our sense (see Lemma \ref{lem:uniqueness}).

The final goal is to show that the resulting procedures have model selection
consistency over the entire class of \sss distributions.
This is accomplished by deriving an approximation to the Bayes factors
in terms of BIC based Bayes factors and showing 
consistency of the BIC procedures implies that of the Bayes factors 
based procedures.

The organization of this paper is as follows.
In Section \ref{sec:prior}, we give details of the prior distribution.
In Section \ref{sec:marginal},
we show that the Bayes factor with respect to the above prior 
is given by
\begin{equation}\label{BF:intro}
\BF\gvf(\nu)=\BF^G\gvf(\nu)
\end{equation}
where
\begin{equation} \label{BFG:intro}
\BF^G\gvf(\nu)
=\frac
{\int_0^{\infty} g^{\frac{\nu}{2}-1}(1+g)^{\frac{n-q_{\bgm}-1}{2}}
\{g(1-R_{\bgm}^2)+1 \}^{-\frac{n-1}{2}}\, dg}
{\int_0^{\infty} g^{\frac{\nu}{2}-1}(1+g)^{\frac{n-p-1}{2}}
\{g(1-R_F^2)+1 \}^{-\frac{n-1}{2}}\, dg},
\end{equation}
for $0<\nu<\min_{\bgm} q_{\bgm}=1$. 
In \eqref{BFG:intro}, $\BF^G\gvf(\nu) $ 
is the Bayes factor for standard Gaussian errors and
\begin{equation*}
 R_{\bgm}^2
=1-
\frac{\|\bm{Q}_\bgm(\bm{y}-\bar{y}\bm{1}_n)\|^2}{\|\bm{y}-\bar{y}\bm{1}_n\|^2},\quad
 R_{F}^2
=1-
\frac{\|\bm{Q}_F(\bm{y}-\bar{y}\bm{1}_n)\|^2}{\|\bm{y}-\bar{y}\bm{1}_n\|^2},
\end{equation*}
with $ \bm{Q}_\bgm=\bm{I}-\bm{X}_\bgm(\bm{X}'_\bgm\bm{X}_\bgm)^{-1}\bm{X}'_\bgm$
and $ \bm{Q}_F=\bm{I}-\bm{X}_F(\bm{X}'_F\bm{X}_F)^{-1}\bm{X}'_F$,
which are the coefficient of determination 
under the submodel $\mcM_{\bgm}$ and the full model $\mcM_F$, respectively.
From \eqref{BF:intro}, 
the Bayes factor does not depend on the particular \sss sampling density.
Hence, even when there is no specific information about 
the form of the error distribution of each model 
(other than spherical symmetry), 
it is not necessary to specify the exact form of the sampling density. 
It suffices to assume it is Gaussian.
As far as we know, in the area of Bayesian variable selection with 
shrinkage priors or Zellner's $g$-priors,
the sampling density has been assumed to be Gaussian and
this kind of robustness result has not yet been studied.
Note that we use the term ``robustness''
in this sense of distributional robustness over the class of \sss error distributions.
We specifically are not using the term to indicate a high breakdown point.
The use of the term ``robustness'' in our sense is however common
(if somewhat misleading) in the context of insensitivity to the
error distribution in the shrinkage literature.
In Section \ref{sec:Laplace}, by use of the Laplace approximation,
we approximate the Bayes factor given by \eqref{BF:intro} as 
$ \BF^G\gvf(\nu)\approx\widetilde{\BF}^G\gvf(\nu)$
\begin{equation}\label{eq:intro.approxBF}
\widetilde{\BF}^G\gvf(\nu) =
\frac{\varphi(q_{\bgm}-\nu,1-R^2_{\bgm})}{\varphi(p-\nu,1-R^2_F)}
\BF^G\gvf\BIC
\end{equation}
as $n\to\infty$ where 
\begin{equation*}
\varphi(s,r)= rs^{s-1}\{(1/r-1)e\}^{-s}
\end{equation*}
and $\BF^G\gvf\BIC$ is the BIC based alternative 
for standard Gaussian errors
\begin{equation*}
\BF^G\gvf\BIC= \left\{\frac{(1-R_{\bgm}^2)^{-n}n^{-q_{\bgm}}}{(1-R_F^2)^{-n}n^{-p}}\right\}^{1/2}.
\end{equation*}
(See, e.g.~\cite{Hastie-etal-2009}, Chapter 7.) 
Since $\varphi_0(s,r)$ does not depend on $n$, 
\eqref{eq:intro.approxBF} is asymptotically equivalent to BIC with
a simple $O(1)$ rational correction function
depending upon $\nu$ as well as the $R$-squares and the numbers of predictors.
Actually this is a special case of Theorem \ref{thm:Laplace} in which
several Bayes factors under Gaussian errors which have been 
proposed in earlier studies, are shown to have similar asymptotic
approximations.
While the main theme in this paper is to develop the relationship 
\eqref{BF:intro} under sub-harmonic priors, we believe 
that this asymptotic equivalence is another noteworthy contribution, in particular, from 
the computational point of view.
In Section \ref{sec:consistency}, we show that our Bayes factor 
has model selection consistency 
uniformly over the class of \sss error distributions,
as $n \to \infty$ and $p$ is fixed.
It also follows from these results that several model selection 
methods recently studied in the literature for Gaussian errors
have model selection consistency for the entire class of \sss error distributions.
In Section \ref{sec:null}, we indicate an alternative development
which allows all $2^p$ possible models.
We emphasize once more that the inclusion of all $2^p$ possible models
(including the null model) requires only a slight modification
of the developments in the main body of the paper and that
resulting Bayes factor is of the form given in \eqref{eq:intro.approxBF}
with $n-1$ replaced by $n$, $q_{\bgm}$ by $q_{\bgm}+1$, $p$ by $p+1$, 
$R^2_{\bgm}$ by $\check{R}_{\bgm}^2$, and $R^2_F$ by $\check{R}_F^2$
where 
\begin{equation*}
 \check{R}_{\bgm}^2
=1-\frac{\|\bm{Q}_\bgm(\bm{y}-\bar{y}\bm{1}_n)\|^2}{\|\bm{y}\|^2},\quad
 \check{R}_{F}^2
=1-\frac{\|\bm{Q}_F(\bm{y}-\bar{y}\bm{1}_n)\|^2}{\|\bm{y}\|^2},
\end{equation*}
are the ``non-centered'' coefficients of determination.
Further, model consistency (including for the null model) holds for this modification.
We provide illustrations of the method and comparisons with other methods
using both simulated and real data in Section \ref{sec:ex}.
We give concluding remarks in Section \ref{sec:cr}. 
The Appendix presents some of the more technical proofs.

\section{Prior distributions}
\label{sec:prior}
In this section, for each submodel, we give a prior joint density of a form
\begin{equation*}
\pi(\alpha,\bbe_{\bgm}, \eta_{\bgm})= \pi(\alpha)\pi(\eta_{\bgm})
\pi(\bbe_{\bgm}|\eta_{\bgm}) ,
\end{equation*}
where $\eta_{\bgm}=1/\sigma_{\bgm}^2$.
We choose the natural priors for location ($\alpha$) and scale ($\eta_{\bgm}$), 
\begin{equation} \label{improper-prior-alpha}
\pial(\alpha)=I_{(-\infty,\infty)}(\alpha),
\end{equation}
and 
\begin{equation} \label{improper-prior-sigma^2}
\piet(\eta_{\bgm})=\eta_{\bgm}^{-1}I_{(0,\infty)}(\eta_{\bgm}).
\end{equation}
Since \eqref{improper-prior-alpha} and \eqref{improper-prior-sigma^2} 
have invariance to location and scale transformation,
respectively, they are considered by many as non-informative objective priors. 

Next we give conditional priors on
$\bbe_{\bgm}$ given $\eta_{\bgm}$
\begin{equation}\label{g-prior-1}
\pibeet(\bbe_{\bgm}|\eta_{\bgm};\nu) 
= \int_0^\infty \pig(g;\nu) 
\phi_{q_{\bgm}}(\bbe_{\bgm}| \bm{0},
g\eta_{\bgm}^{-1}(\bm{X}'_{\bgm}\bm{X}_{\bgm})^{-1})dg
\end{equation}
where $\nu$ is a non-random positive parameter. Further
\begin{equation}\label{g-prior-2}
 \pig(g;\nu) = g^{\nu/2-1}I_{(0,\infty)}(g),
\end{equation}
and $\phi_q(\cdot| \bm{\mu}, \bm{\Sigma})$
denotes the $q$-variate Gaussian density with mean vector $\bm{\mu}$ and
covariance matrix $\bm{\Sigma}$.
The prior \eqref{g-prior-1} clearly has a hierarchical structure and 
it can be interpreted a scale mixture of Zellner's $g$-priors.
Similar priors have been considered by \cite{Liang-etal-2008} and
\cite{Maruyama-George-2011} and others under the Gaussian linear regression setup.
See Sub-Section \ref{sec:g-prior} below for a review of priors on $g$.
Thanks to the simple form of $ \pig(g|\nu)$, the analytical integration
is possible as
\begin{equation}\label{eq:limit.variant}
\begin{split}
& \pibeet(\bbe_{\bgm}|\eta_{\bgm};\nu) \\ 
& = \int_0^{\infty} g^{\nu/2-1} 
\frac{|\bm{X}'_{\bgm}\bm{X}_{\bgm}|^{1/2}\eta_{\bgm}^{q_{\bgm}/2}}{(2\pi)^{q_{\bgm}/2}g^{q_{\bgm}/2}}
\exp\left(-\frac{\eta_{\bgm}}{2g}\bbe'_{\bgm}\bm{X}'_{\bgm}\bm{X}_{\bgm}\bbe_{\bgm}\right)dg \\
&= \frac{\Gamma(\{q_{\bgm}-\nu\}/2)}{2^{\nu/2}\pi^{q_{\bgm}/2}}|\bm{X}'_{\bgm}\bm{X}_{\bgm}|^{1/2} 
(\bbe'_{\bgm}\bm{X}'_{\bgm}\bm{X}_{\bgm}\bbe_{\bgm})^{-(q_{\bgm}-\nu)/2}\eta_{\bgm}^{\nu/2},
\end{split}
\end{equation}
when $0<\nu<q_{\bgm}$, which we assume throughout the paper.
In summary, 
the prior joint density under $\mcM_{\bgm}$ is given by
\begin{equation} \label{improper-joint}
\begin{split}
& \pigm(\alpha,\bbe_{\bgm},\eta_{\bgm};\nu) 
 =\pial(\alpha)\piet(\eta_{\bgm})
\pibeet(\bbe_{\bgm}|\eta_{\bgm};\nu) \\
& = 
\frac{\Gamma(\{q_{\bgm}-\nu\}/2)}{2^{\nu/2}\pi^{q_{\bgm}/2}}|\bm{X}'_{\bgm}\bm{X}_{\bgm}|^{1/2} 
 (\bbe'_{\bgm}\bm{X}'_{\bgm}\bm{X}_{\bgm}\bbe_{\bgm})^{-(q_{\bgm}-\nu)/2}\eta_{\bgm}^{\nu/2-1},
\end{split}
\end{equation}
where $0<\nu<q_{\bgm}$.
\begin{remark}
The use of Bayes factors for model comparisons with these improper priors 
for $\alpha$ and $\eta$ is formally justified here 
because $\alpha$ and $\eta$ are location-scale parameters that appear in every submodel.   
See \cite{Ber-Per-Var-1998, Berger-Bernardo-Sun-2009} for details.  
Further, impropriety of $ \pibeet(\bbe_{\bgm}|\eta_{\bgm};\nu)$ comes from
that of $\pig(g;\nu)$. Clearly $\pig(g;\nu)$ appears in every submodel and
with the same $\nu$. Thus the use of the improper prior is justifiable.
\end{remark}
If, in the above joint prior on $(\alpha,\bbe_{\bgm},\eta_{\bgm})$, we make 
the change of variables, $\bm{\theta}_{\bgm}=(\bm{X}'_{\bgm}\bm{X}_{\bgm})^{1/2}\bbe_{\bgm}$,
the joint prior of $(\alpha,\bm{\theta}_{\bgm},\eta_{\bgm})$ becomes
\begin{equation} \label{improper-joint-1}
 \pigm(\alpha,\bm{\theta}_{\bgm},\eta_{\bgm};\nu) 
 = 
\frac{\Gamma(\{q_{\bgm}-\nu\}/2)}{2^{\nu/2}\pi^{q_{\bgm}/2}}
\|\bm{\theta}_{\bgm}\|^{-(q_{\bgm}-\nu)}\eta_{\bgm}^{\nu/2-1}.
\end{equation}
As noted in Section \ref{sec:intro},
the part depending on $\bm{\theta}_{\bgm}$, $\|\bm{\theta}_{\bgm}\|^{-(q_{\bgm}-\nu)}$ for $0 <\nu<\min(2,q_{\bgm})$,
is known as a sub-harmonic function, 
that is,
\begin{equation*}
 \sum_{i=1}^{q_{\bgm}} \frac{\partial^2}{\partial \theta_i^2}
\|\bm{\theta}_{\bgm}\|^{-(q_{\bgm}-\nu)} 
 = (q_{\bgm}-\nu)(2-\nu)\|\bm{\theta}_{\bgm}\|^{-(q_{\bgm}-\nu)-2}>0.
\end{equation*}

\subsection{Review of priors on $g$} \label{sec:g-prior}
As noted above, the prior given by \eqref{g-prior-1}
is a scale mixture of Zellner's $g$-priors.
Actually the original Zellner's $g$-priors were used
for the Gaussian linear regression setup and historically 
the hyperparameter $g$ has been a priori fixed or somehow estimated.
The first paper to effectively use a prior on $g$
was \cite{Zellner-Siow-1980}; 
they stated things in terms of multivariate Cauchy densities, 
which can always be expressed as a mixture of $g$-priors where
\begin{equation*}
 \pi(g)= g^{-3/2}\exp(-n/\{2g\}).
\end{equation*}
Here we review the prior on $g$, the second stage of $g$-priors 
for Gaussian linear regression. 
We hope that it helps to clarify our prior \eqref{g-prior-2} on $g$ which applies to the entire class of \sss error densities and not just Gaussian errors.

As a generalization of $\pig(g;\nu)$ given by \eqref{g-prior-2},
consider the prior on $g$,
\begin{equation} \label{nu_a}
\pig(g;\{\nu,k,l\})=
g^{\nu/2-1}(1+l/g)^{-k/(2l)}I_{(0,\infty)}(g).
\end{equation}
Note that it is improper at $g=0$ when $ \nu\leq -k/l$.
As we will see in Section \ref{sec:marginal} and \ref{sec:Laplace},
Bayes factors are not well-defined when the prior on $g$ is improper
at $g=0$ and that is why $ \nu > -k/l$ is assumed.
On the other hand, it is improper at $g=\infty$ when $\nu\geq 0$.
But  as long as $0\leq\nu< q_{\bgm}$,
the Bayes factor under $\mcM_{\bgm}$ is well-defined as shown in Section \ref{sec:marginal}. 

All examples of mixtures of $g$-priors that we have found in the literature
are of the form \eqref{nu_a}.
For example, 
\cite{Liang-etal-2008} considered the (proper) ``hyper-$g$'' case where $-2<\nu<0$,
$k=2-\nu$ and $l=1$ and the (proper) ``hyper-$g/n$'' case where $-2<\nu<0$,
$k=n(2-\nu) $ and $l=n$.
\cite{Guo-Speckman-2009} and \cite{Celeux-Anbari-Marin-Robert-2012}
considered the (improper) case where $\nu=0$, $k=2$ and $l=1$.
\cite{Zellner-Siow-1980} considered the case where $\nu=-1$, $ k=n$ and $l=0$ since
\begin{equation*}
\lim_{l\to 0}(1+l/g)^{-k/(2l)} =\exp(-\{k/(2g)\}).
\end{equation*}
\cite{Maruyama-George-2011} considered $-1<\nu<0$, $k=(n-q_{\bgm}-1)/2$ and $l=1$. 
As in \eqref{g-prior-2}, we are considering the (improper) 
case where $0<\nu<q_{\bgm}$ and $k=0$.
\begin{remark}\label{rem:separability2}
As emphasized in  \cite{Liang-etal-2008}, a major reason for studying
the above class of mixtures of $g$-priors is their tractability in the
Gaussian case. One of the main results of this paper is that
under a condition of separability (see e.g.~Theorem \ref{thm:main-BF})
the Bayes factors for the normal case are valid for
the entire class of \sss distributions. We show also, in Lemma \ref{lem:uniqueness} below, that
our class of sub-harmonic priors are the only ones 
among the class of mixtures of $g$-priors given in \eqref{nu_a}.
for which this extension holds.
In this sense, our sub-harmonic priors are unique among the class \eqref{nu_a},
in being robust over the class of \sss distributions and leading to
tractable Bayes factors.
\end{remark}

\begin{remark}\label{rem:objectivity}
Here we discuss objectivity (or at least non-subjectivity)
of the prior in terms of hyper-parameters of the prior on $g$.
Under the prior for $g$ given by \eqref{nu_a}, consider the (proper or improper) prior on
$\bbe_{\bgm}$ given $\eta_{\bgm}$, 
\begin{equation}\label{bbe|eta}
\begin{split}
\pibeet(\bbe_{\bgm};\eta_{\bgm},\{\nu,k,l\}) 
\stackrel{\mathrm{def.}}{=}
\int_0^\infty \pi(g;\{\nu,k,l\})\phi_{q_{\bgm}}(\bbe_{\bgm}| \bm{0},
g\eta_{\bgm}^{-1}(\bm{X}'_{\bgm}\bm{X}_{\bgm})^{-1})dg
\end{split}
\end{equation}
In order to obtain the asymptotic behavior of the density as
$ \eta_{\bgm}^{1/2}\|\bbe_{\bgm}\|\to\infty$,
we appeal to the Tauberian theorem for the Laplace transform 
(see \cite{geluk-dehaan-1987}).
Since $ (1+l/g)^{k/(2l)}\to 1$ for any $\{k,l\}$ as $g\to\infty$,
\begin{equation} \label{eq:asymp.beta}
 \lim_{\eta_{\bgm}^{1/2}\|\bbe_{\bgm}\|\to\infty}
\frac{\pibeet(\bbe_{\bgm}|\eta_{\bgm},\{\nu,a,b\})}
{\left\{\eta_{\bgm}\bbe'_{\bgm}\bm{X}'_{\bgm}
\bm{X}_{\bgm}\bbe_{\bgm}\right\}^{-\frac{q_{\bgm}-\nu}{2}}
\eta_{\bgm}^{\frac{q_{\bgm}}{2}}} 
=
\frac{\Gamma(\{q_{\bgm}-\nu\}/2)}{2^{\nu/2}\pi^{q_{\bgm}/2}}|\bm{X}'_{\bgm}\bm{X}_{\bgm}|^{1/2}
\end{equation}
when $\nu<q_{\bgm}$.
Hence the asymptotic order of \eqref{eq:asymp.beta} is 
the same as \eqref{eq:limit.variant} and does not depend on $\{k,l\}$.
The larger $\nu(<q_{\bgm})$ is, the more objective the prior
$ \pibeet(\bbe_{\bgm}|\eta_{\bgm},\{\nu,k,l\})$ is.
\end{remark}

\section{Marginal density and Bayes factor under sub-harmonic priors}
\label{sec:marginal}
In this section we derive the marginal density under each submodel
and the Bayes factor for comparing
each $ \mcM_{\bgm}$ to the full model $\mcM_F$.
The marginal density of $\bm{y}$ under
$\mcM_{\bgm} $,
is given by
\begin{equation}
\begin{split}
M_{\bgm}(\bm{y}|\nu)= & \int_{-\infty}^{\infty} \int_{R^{q_{\bgm}}}
 \int_{0}^{\infty}
\eta_{\bgm}^{n/2}f_n(\eta_{\bgm}\|\bm{y}-\alpha \bm{1}_n -\bm{X}_{\bgm}\bbe_{\bgm}\|^2) \\
& \qquad \times \pigm(\alpha, \bbe_{\bgm}, \eta_{\bgm};\nu)
d \alpha \, d \bbe_{\bgm} \, d\eta_{\bgm} , 
\end{split}
\label{full-marginal}
\end{equation}
where the prior $\pigm(\alpha,\bbe_{\bgm},\eta_{\bgm}|\nu)$ is
given by \eqref{improper-joint}:
\begin{equation} 
 \pigm(\alpha,\bbe_{\bgm},\eta_{\bgm};\nu) 
 = 
\frac{\Gamma(\{q_{\bgm}-\nu\}/2)}{2^{\nu/2}\pi^{q_{\bgm}/2}}
\frac{|\bm{X}'_{\bgm}\bm{X}_{\bgm}|^{1/2}\eta_{\bgm}^{\nu/2-1}} 
{ (\bbe'_{\bgm}\bm{X}'_{\bgm}\bm{X}_{\bgm}\bbe_{\bgm})^{(q_{\bgm}-\nu)/2}}.
\end{equation}
Two aspects of the joint density $\pigm(\alpha,\bbe_{\bgm},\eta_{\bgm};\nu)$ above,
namely
\begin{enumerate}[{K}1.]
\item
$(\alpha,\bbe_{\bgm})$ and $\eta_{\bgm}$ are separable
in the sense that the prior distribution factors into two terms involving the indicated parameters, \label{Key1}
\item 
The term involving $\eta_{\bgm}$ is given by a power function, \label{Key2}
\end{enumerate}
will be key for calculating the marginal density for 
the entire class of \sss error densities and not just Gaussian errors.
Let $ M_{\bgm}(\bm{y}|\nu) $ and $ M_{\bgm}^G(\bm{y}|\nu) $ be the marginal densities 
under $\mcM_{\bgm}$ with general \sss errors $\bep_\bgm$ and with
with standard Gaussian errors $\bep_G$, respectively.
The next result provides a relationship between $ M_{\bgm}(\bm{y}|\nu) $ and $ M^G_{\bgm}(\bm{y}|\nu) $.
\begin{lemma}\label{lem:relationship-G}
Let $\nu$ be between $0$ and $q_{\bgm}$. 
Assume the existence of $E[\|\bep_{\bgm} \|^\nu]$. Then
\begin{equation} \label{MG}
M_{\bgm}(\bm{y}|\nu)
=\frac{E[\|\bep_{\bgm} \|^\nu]}{E[\|\bep_G \|^\nu]}
M_{\bgm}^G(\bm{y}|\nu).
\end{equation}
\end{lemma}
\begin{proof}
 See Appendix \ref{app:relationship-G}.
\end{proof}
Hence $ M_{\bgm}(\bm{y}|\nu)$ depends on the error distribution $ \bep_{\bgm}$
only through the $\nu$-th moment of $\bep_{\bgm}$, $ E[\|\bep_{\bgm} \|^\nu]$. 
The identity in Lemma \ref{lem:relationship-G} under K\ref{Key1} and K\ref{Key2} has been used in \cite{Maruyama-2003b}  and \cite{Maru-Straw-2005} 
for finding robust minimax estimators. 
Essentially the same identity in Lemma \ref{lem:relationship-G} 
under scale mixture of normals error distribution $\bep_{\bgm}$ with the prior 
$\pi(\alpha,\bbe_{\bgm},\eta_{\bgm})=\eta^{-1}_{\bgm}$ was used 
for the Bayesian prediction problem in \cite{Jammalamadaka-et-al-1987}.

Using the expression of the prior \eqref{improper-joint} as
the scale mixture of normals in Section \ref{sec:prior},
we will make use of the following result which may be founded in equation (5) of
\cite{Liang-etal-2008}.
\begin{lemma} \label{lem:m-G}
Let $0<\nu<q_{\bgm}$. Then
\begin{equation}\label{MGG}
M^G_{\bgm}(\bm{y}|\nu)
 = \frac{n^{1/2}\Gamma(\{n-1\}/2)}{\|\bm{y}-\bar{y}\bm{1}_n\|^{n-1}\pi^{(n-1)/2}}
\int_0^{\infty} \frac{g^{\nu/2-1}(1+g)^{(n-q_{\bgm}-1)/2}}
{\left\{g(1-R_{\bgm}^2)+1 \right\}^{(n-1)/2}} \, dg, 
\end{equation}
where $R_{\bgm}^2$ is the coefficient of determination under the submodel $\mcM_{\bgm}$.
\end{lemma}
Combining Lemmas \ref{lem:relationship-G} and \ref{lem:m-G}, 
we have the main result of this paper.
\begin{thm}\label{thm:main-BF}
Assume the full model $\mcM_F$ and the submodel $\mcM_\bgm$ are
given by \eqref{full-model} and \eqref{submodel-gamma}, respectively.
Also assume their error terms, $\bep_F$ and $\bep_\bgm$ have the same
\sss distribution \eqref{bep_sim_f} with mean zero and the identity
covariance matrix.
Let $0<\nu <q_{\bgm}$.
Assume that the proper joint prior densities
of $(\alpha, \bbe_F, \eta_F)$ and $(\alpha, \bbe_{\bgm}, \eta_{\bgm})$
are given by \eqref{improper-joint} and assume also $E[\|\bep_{\bgm}\|^\nu]<\infty$.
Then, for $\mcM_\bgm\neq\mcM_N$, the Bayes factor for comparing
each of $ \mcM_{\bgm}$ to the full model $ \mcM_F$ is
given by
\begin{equation}\label{eq:main-thm-BF}
\BF\gvf(\nu)
 = \frac{M_{\bgm}(\bm{y}|\nu)}{M_F(\bm{y}|\nu)} 
=\BF^G\gvf(\nu)
\end{equation}
where
\begin{equation} 
\label{main-BF}
\BF^G\gvf(\nu)
=\frac{\int_0^{\infty} g^{\frac{\nu}{2}-1}(1+g)^{\frac{n-q_{\bgm}-1}{2}}
\{g(1-R_{\bgm}^2)+1 \}^{-\frac{n-1}{2}}\, dg}
{\int_0^{\infty} g^{\frac{\nu}{2}-1}(1+g)^{\frac{n-p-1}{2}}
\{g(1-R_F^2)+1 \}^{-\frac{n-1}{2}}\, dg}.
\end{equation}
\end{thm}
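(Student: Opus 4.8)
The plan is to simply compose the two lemmas already proved. Theorem~\ref{thm:main-BF} asserts an equality of limits, so the work is bookkeeping: show that the limit defining $\BF\gvf(\nu)$ exists and equals the ratio of two copies of the integral appearing in Lemma~\ref{m-G}, one for $\mcM_\bgm$ and one for $\mcM_F$.

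First I would observe, as the text preceding the statement already does, that $V(\bm{h},\nu)$ depends only on $\nu$ and not on which model we are in (provided the \emph{same} $\nu$ is used in every submodel, which is part of the hypothesis via the form \eqref{proper-joint}). Hence
\begin{equation*}
\lim_{\bm{h}\to\bm{\infty}}\frac{m_{\bgm}(\bm{y}|\nu;\bm{h})}{m_F(\bm{y}|\nu;\bm{h})}
=\lim_{\bm{h}\to\bm{\infty}}\frac{V(\bm{h},\nu)\,m_{\bgm}(\bm{y}|\nu;\bm{h})}{V(\bm{h},\nu)\,m_F(\bm{y}|\nu;\bm{h})}
=\frac{M_{\bgm}(\bm{y}|\nu)}{M_F(\bm{y}|\nu)},
\end{equation*}
where the last equality uses the definition \eqref{full-marginal-M} of $M_{\bgm}$ (and $M_F$) as such a limit, together with the fact that the limits in numerator and denominator each exist and are finite and strictly positive. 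The finiteness and positivity is exactly where the hypotheses $0<\nu<q_{\bgm}$ and $E[\|\bep_\bgm\|^\nu]<\infty$ are used: by Lemma~\ref{relationship-G} and Lemma~\ref{m-G}, $M_{\bgm}(\bm{y}|\nu)$ equals a finite positive constant times the convergent integral in \eqref{MGG}, and the analogous statement holds for $M_F$ with $q_\bgm$ replaced by $p$ and $R_\bgm^2$ by $R_F^2$ (note $0<\nu<q_\bgm\le p$, so the full-model integral also converges, and $E[\|\bep_F\|^\nu]=E[\|\bep_\bgm\|^\nu]<\infty$ since the two error terms share the same \sss distribution).

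Next I would plug in. By Lemma~\ref{relationship-G},
\begin{equation*}
\frac{M_{\bgm}(\bm{y}|\nu)}{M_F(\bm{y}|\nu)}
=\frac{E[\|\bep_\bgm\|^\nu]\,M_\bgm^G(\bm{y}|\nu)}{E[\|\bep_G\|^\nu]}
\cdot\frac{E[\|\bep_G\|^\nu]}{E[\|\bep_F\|^\nu]\,M_F^G(\bm{y}|\nu)}
=\frac{M_\bgm^G(\bm{y}|\nu)}{M_F^G(\bm{y}|\nu)},
\end{equation*}
the moment factors cancelling because $\bep_\bgm$ and $\bep_F$ have the same distribution. Finally, substituting the closed form \eqref{MGG} from Lemma~\ref{m-G} for both $M_\bgm^G(\bm{y}|\nu)$ and $M_F^G(\bm{y}|\nu)$, the common prefactor $n^{1/2}\Gamma(\{n-1\}/2)\|\bm{y}-\bar y\bm{1}_n\|^{-(n-1)}\pi^{-(n-1)/2}$ cancels, leaving precisely the ratio of integrals in \eqref{main-BF}. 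This also shows $\BF\gvf(\nu)=\BF^G\gvf(\nu)$, which is the invariance claim; one may add a sentence noting the restriction to $0<\nu<\min_\bgm q_\bgm=1$ needed so that a single $\nu$ is admissible across all submodels.

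I do not expect any real obstacle here: the theorem is a corollary of the two preceding lemmas, and the only points requiring a word of care are (i) justifying that the limits in numerator and denominator may be taken separately — which is legitimate because each is finite and the denominator limit is nonzero — and (ii) checking that the hypotheses of Lemmas~\ref{relationship-G} and \ref{m-G} are met for the full model as well as for $\mcM_\bgm$, which follows from $q_\bgm\le p<n-1$ and the shared error law. The genuinely substantive content — interchanging $h_g\to\infty$ with integration to get \eqref{full-marginal-M}, and the change-of-variables computation yielding the moment-ratio identity \eqref{MG} — lives in the proofs of those lemmas (deferred to the Appendix), not in the proof of the theorem itself.
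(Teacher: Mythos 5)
Your proposal is correct and follows exactly the paper's route: the paper proves the theorem in one line ("Combining Lemmas \ref{relationship-G} and \ref{m-G}, we have the main result"), and your argument is simply that composition spelled out, with the moment-ratio factors cancelling because the error laws coincide and the Gaussian prefactors cancelling in the ratio. The extra care you take over finiteness/positivity of the separate limits is sound bookkeeping the paper leaves implicit.
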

By Theorem \ref{thm:main-BF}, even when there is no specific information about 
the error distribution of each model (other than spherical symmetry), 
but we assume they are all the same, 
it is not necessary to specify the exact form of the sampling density. 
It suffices to assume they are all Gaussian.
As far as we know, in the area of Bayesian variable selection with shrinkage priors,
the sampling density has been assumed to be Gaussian and
this kind of robustness result has not yet been studied.
\begin{remark}
\cite{Maruyama-George-2011} considered Bayesian variable selection under Gaussian errors.
They proposed Bayes factors with a simple analytic form
under generalized ridge-type priors.
The results heavily depend on special features of Gaussian distributions and hence
the extension or generalization of \cite{Maruyama-George-2011}
to the general \sss case, may not be possible, 
or may not lead to analytically tractable procedures
which are distributionally robust to \sss error distributions.
\end{remark}

\begin{remark}\label{rem:coherent}
A collection of Bayes factors is called coherent if 
\begin{equation*}
 \BF_{\gamma_1:\gamma_2}=\BF_{\gamma_1:\gamma_0}\BF_{\gamma_0:\gamma_2}, 
\mbox{ and }\BF_{\gamma_1:\gamma_2}=1/\BF_{\gamma_2:\gamma_1},
\end{equation*}
for all $\gamma_1$ and $\gamma_2$ (see, e.g.~\cite{Robert-2007-BC}).
By \eqref{eq:main-thm-BF}, the Bayes factors corresponding
to our sub-harmonic priors are coherent 
(with the exception of those involving the null model $\mcM_N$),
which is why we require $\PR(\mcM_N)=0$.
Also with the adaptation of the alternative specification given in
Section \ref{sec:null}, coherence holds for all Bayes factors including
those involving the null model $\mcM_N$.

As in \cite{Zellner-Siow-1980} and \cite{Liang-etal-2008}, 
the posterior probability of any model is an expression of the 
form \eqref{posterior-2} given by
\begin{equation*}
 \mathrm{Pr}(\mcM_{\bgm}|\bm{y})
=\frac{\PR(\mcM_{\bgm}) \BF\gvf}{\sum_{\bgm} \PR(\mcM_{\bgm}) \BF\gvf}.
\end{equation*}
In our development, we choose the full model $\mcM_F$ as the base model
rather than the null model $\mcM_N$ employing the encompassing approach of
\cite{Zellner-Siow-1980}. Without employing the adaptation of our prior
described in Section \ref{sec:null},
the choice of the full model $\mcM_F$ as the base model, 
as opposed to the null model $\mcM_N$, is forced on us since $\PR(\mcM_N)=0$. 
\cite{Liang-etal-2008} argue that the null model is the 
superior choice as a base model in their setup (which also involves $g$-priors
or mixture thereof) due to incoherence of the Bayes factors 
if $\mcM_F$ is chosen as the base model (in their setup).
This incoherence arises in the setup of \cite{Liang-etal-2008} because
prior distribution on the full model $\mcM_F$ depends on each nested
alternative $\mcM_{\bgm}$. This incoherence is not a problem in our setup
since the choice of prior for each submodel depends only on the submodel,
and we have taken care that all relevant (conditional) posteriors are
well defined. 
As noted above, by \eqref{eq:main-thm-BF} our Bayes factors are coherent.
In fact our development (aside from eliminating the null model from the 
consideration) is very close in spirit to the null-based Bayes factors 
approach in \cite{Liang-etal-2008}. 
\end{remark}

\subsection{Robustness, uniqueness and tractability}
For establishing Lemma \ref{lem:relationship-G},
two aspects K\ref{Key1} (separability) and K\ref{Key2} 
(power function for the distribution of $\eta_{\bgm}$)
are key for calculating the marginal density for 
the entire class of \sss error densities and not just Gaussian errors.
Recall from Sub-Section \ref{sec:g-prior} that
the class of mixtures of $g$-priors found in the literature are of the form
\eqref{g-prior-1} with $\pi(g;\cdot)$ of the form \eqref{nu_a}.
Hence in each case we may express $ \pibeet(\bbe_{\bgm}|\eta_{\bgm})$ as 
\begin{equation}\label{eq:mixturesg}
\pibeet(\bbe_{\bgm}|\eta_{\bgm}) 
= \int_0^\infty g^{\nu/2-1}\tilde{\pi}(g) 
\phi_{q_{\bgm}}(\bbe_{\bgm}| \bm{0},
g\eta_{\bgm}^{-1}(\bm{X}'_{\bgm}\bm{X}_{\bgm})^{-1})dg
\end{equation}
where $\tilde{\pi}(g)\to 1$ as $g\to\infty$.
Note in particular that our sub-harmonic prior (see \eqref{g-prior-2})
corresponds to $\tilde{\pi}(g)\equiv 1$ in \eqref{eq:mixturesg}.
If $ \pibeet(\bbe_{\bgm}|\eta_{\bgm}) $ is separable with respect to
$\bbe_{\bgm}$ and $\eta_{\bgm}$ and
the term involving $\eta_{\bgm}$ is given by the power function,
K\ref{Key1} and K\ref{Key2} are satisfied.

Clearly when $ \tilde{\pi}(g)\equiv 1$, K\ref{Key1} and K\ref{Key2} are satisfied
as in \eqref{eq:limit.variant}. 
As in \eqref{eq:asymp.beta} in Sub-Section \ref{sec:g-prior},
the asymptotic order of $\pibeet(\bbe_{\bgm}|\eta_{\bgm}) $ with 
$\lim_{g\to\infty}\tilde{\pi}(g)=1$ is 
the same as $\pibeet(\bbe_{\bgm}|\eta_{\bgm}) $ with $ \tilde{\pi}(g)\equiv 1$.
As a matter of fact, among the class of scale mixtures of $g$-priors 
\eqref{eq:mixturesg} with $ \lim_{g\to\infty}\tilde{\pi}(g)=1$,
$\tilde{\pi}(g)\equiv 1$ is the only choice leading to separability
as stated in the following result.
In particular, our sub-harmonic priors are the only separable priors
among the class of mixtures of $g$-priors found in the literature and
more generally among priors of the form \eqref{eq:mixturesg} where 
$\lim_{g\to\infty}\tilde{\pi}(g)=1$.
\begin{lemma}\label{lem:uniqueness}
 If $\pi(\bbe_{\bgm}|\eta_{\bgm})$ in \eqref{eq:mixturesg}
is continuous and can be expressed as the product of a function
of $\bbe'\bm{X}'\bm{X}\bbe$ and a function of $\eta$ and if
$\lim_{g\to\infty}\tilde{\pi}(g)=1$, then $\tilde{\pi}(g)\equiv 1$.
\end{lemma}
\begin{proof}
See Appendix \ref{app:uniqueness}.
\end{proof}
The relationship given in Lemma \ref{lem:relationship-G} 
remains true under more general separable priors
$\eta_{\bgm}^{\nu/2-1}\pi(\alpha, \bbe_{\bgm})$, which satisfy
K\ref{Key1} and K\ref{Key2}. Unfortunately we could not find any other priors, 
which lead to analytically tractable Bayes factors under Gaussian errors,
except for our sub-harmonic priors $ \pigm(\alpha,\bbe_{\bgm},\eta_{\bgm};\nu) $
given by \eqref{improper-joint}.
For simplicity, let 
\begin{equation*}
 \pi(\alpha, \bbe_{\bgm})=\pi(\alpha)\pi(\bbe)=1\times |\bm{X}'\bm{X}|^{1/2}\pi(\bbe'\bm{X}'\bm{X}\bbe).
\end{equation*}
After some calculation, the marginal density under the prior is given by
\begin{equation*}
 \frac{\Gamma((n+\nu-1)/2)}{2^{-\nu/2}\pi^{(n-1)/2}}
\iint_{R^{q_{\bgm}}}\frac{\pi(\|\bm{\theta}_{\bgm}\|^2)d\bm{\theta}_{\bgm}}
{\|\bm{\theta}_{\bgm}-\bm{U}'_{\bgm}(\bm{y}-\bar{y}\bm{1}_n)\|^{n+\nu-1}}
\end{equation*}
which is a function of $R^2_{\bgm}=\|\bm{U}'_{\bgm}(\bm{y}-\bar{y}\bm{1}_n)\|^2$
from spherical symmetry, (here $\bm{U}_{\bgm}$ has orthonormal columns which
span the column space of $\bm{X}_{\bgm}$).
In other words, the marginal density under a general separable prior
is expressed as a multiple integral, which is less tractable
than the marginal density under sub-harmonic separable prior with a one-dimensional
integral expression as in Lemma \ref{lem:m-G} and Theorem \ref{thm:main-BF}.

In summary, as far as we know, the sub-harmonic prior is the unique choice for
robustness and tractability. Certainly this is so among the class of
mixtures of $g$-priors in Sub-Section \ref{sec:g-prior} which represent the class studied
in the literature for Gaussian samples.

\subsection{The choice of $\nu$}
 In an earlier version of this paper we developed the results in the more
general context wherein the \sss distribution of 
$\bep_\bgm \sim f_\bgm(\|\bep_\bgm\|^2)$ could depend on $\bm{\gamma}$, i.e.~it
could be different for each submodel $\mcM_\bgm$. 
So $\bep_F\sim f_F(\|\bep_F\|^2)$ as well.
All of the above results can be developed for the more general case. 
The only essential changes
are that \eqref{eq:main-thm-BF} in Theorem \ref{thm:main-BF} becomes
\begin{equation}\label{eq:main-thm-BF-1}
\BF\gvf(\nu)
=\frac{E[\|\bep_{\bgm} \|^\nu]}{E[\|\bep_F \|^\nu]}
\BF^G\gvf(\nu).
\end{equation}
We investigated the ranges of these ``correction terms'' in 
\eqref{eq:main-thm-BF-1} 
when $\bep_\bgm$ and $\bep_F$ have possibly different \sss $t$-distributions
with at least $3$-degrees of freedom (so that the variances exist),
both analytically and numerically. We found that 
$\BF\gvf(\nu)$ was independent of $n$ and reasonably stable for all $\nu$
in range $(0,1)$ but that stability was greater for $\nu$ close to $0$.
This trade-off between stability (favoring $\nu\approx 0$) and 
objectivity (favoring larger $\nu$) led us initially 
to prefer the midpoint of the allowable values in $(0,1)$, 
namely $ \nu=1/2$ as the default choice.
However the examples presented in Section \ref{sec:ex} indicate that
the performance of the method seems insensitive to the choice of $\nu$
in the range of $(0,1)$.

It is interesting to note in connection with the above that
the correction term $E[\|\bep_{\bgm} \|^\nu]/E[\|\bep_F \|^\nu]$
for $\BF\gvf(\nu)$ approaches $1$ as $\nu\to 0$.
Hence choices of $\nu$ close to $0$ are essentially completely robust
to choice of \sss error distribution for the submodels.

Note also that if we force $\PR(\mcM_\bgm)=0$ for all submodels such that
$q_\bgm\leq 2$, then the allowable range of $\nu$ is $(0,3)$ and
hence $\nu=2$ becomes a possible choice. In this case, again,
the correction term $E[\|\bep_{\bgm} \|^\nu]/E[\|\bep_F \|^\nu]=1$
regardless of the choice of error distributions, since
we have assumed the variance of each component $\bep_{\bgm}$ is $1$, and
the Bayes factor is completely robust to choice of \sss
error distribution. Additionally, the case $\nu=2$
corresponds to the harmonic prior
\begin{equation*}
\pigm(\alpha,\bbe_{\bgm},\eta_{\bgm};\nu) \propto 
\|\bm{\theta}_{\bgm}\|^{2-q_{\bgm}}
\end{equation*}
where $\bm{\theta}_{\bgm}=(\bm{X}'_{\bgm}\bm{X}_{\bgm})^{1/2}\bbe_{\bgm}$
and $q_{\bgm}\geq 3$.
It is well-known that the harmonic prior plays an important role
in estimation problems with the Stein effect
in the sense that the GB estimators of $\bbe_\bgm$ for such submodels are minimax.
See \cite{Maruyama-2003b} for details.
It is interesting to observe the additional advantage of 
the harmonic prior in the model choice problem.
See Section \ref{sec:cr} for some additional discussion of the advantages of such priors.

\subsection{BIC under spherically symmetric error distributions}
BIC (\cite{Schwarz-1978}) is a popular criterion for model selection. 
See e.g.~\cite{Hastie-etal-2009} Chapter 7. 
We will show in this subsection 
that BIC has a similar distributional robustness property to the
above Bayes model selection procedure.
In Section \ref{sec:Laplace}, we will develop Laplace approximations
to our Bayes factors which relate them to BIC.
In Section \ref{sec:consistency} we will show that both 
the BIC and our Bayes model selection 
(as well as a number of other Bayes methods developed specifically for the Gaussian case)
are consistent for the entire class of \sss models.

BIC for the model $\mcM_{\bgm}$ is defined as
\begin{equation} \label{BIC-1}
\BIC_\bgm=-2\ln
\left\{ 
\max_{\alpha,\bbe_{\bgm},\eta_{\bgm}}\eta_{\bgm}^{n/2}
f_n\left(
\eta_{\bgm}\|\bm{y}-\alpha \bm{1}_n -\bm{X}_{\bgm} \bbe_{\bgm} \|^2
\right)
n^{-q_{\bgm}/2}\right\},
\end{equation}
and is derived by eliminating $O(1)$ terms
from the approximate marginal densities.
Here we denote 
\begin{equation}
 M_{\bgm}(\bm{y}|\mathrm{BIC})=\exp(-\BIC_\bgm/2).
\end{equation}
In general, 
maximization with respect to unknown parameters in \eqref{BIC-1}
is not always tractable. However
when $\bep_{\bgm} $ has a unimodal 
\sss distribution,
the maximum is achieved at $\hat{\alpha}=\bar{y}$, 
$\hat{\bbe}_{\bgm}=(\bm{X}'_{\bgm}\bm{X}_{\bgm})^{-1}\bm{X}'_{\bgm}\bm{y}$, and 
\begin{equation}
 1/\hat{\eta}_{\bgm}=c\| \bm{y} -\hat{\alpha}\bm{1}_n -\bm{X}_{\bgm}\hat{\bbe}_{\bgm}\|^2
=c \|\bm{y} -\bar{y}\bm{1}_n\|^2(1-R_{\bgm}^2)
\end{equation}
where $c$ is the sole solution of 
\begin{equation} \label{c-gamma}
 n/2+cf'_n(c)/f_n(c)=0.
\end{equation}
Hence $ M_{\bgm}(\bm{y}|\mathrm{BIC})$ may be expressed as
\begin{equation}\label{MG-BIC}
M_{\bgm}(\bm{y}|\mathrm{BIC})
=\frac{c^{-n/2}f_n(c)}{c_G^{-n/2}f_G(c_G)}
M^G_{\bgm}(\bm{y}|\mathrm{BIC})
\end{equation}
where $ M^G_{\bgm}(\bm{y}|\mathrm{BIC})$ is $ M_{\bgm}(\bm{y}|\mathrm{BIC}) $ 
with Gaussian errors, specifically
\begin{equation} \label{MGG-BIC}
\begin{split}
M^G_{\bgm}(\bm{y}|\mathrm{BIC}) &=
c_G^{-n/2}f_G(c_G)\{ \|\bm{y} -\bar{y}\bm{1}_n\|^2(1-R_{\bgm}^2)\}^{-n/2}
n^{-q_{\bgm}/2} \\
&=  n^{-n/2}f_G(n)\{ \|\bm{y} -\bar{y}\bm{1}_n\|^2(1-R_{\bgm}^2)\}^{-n/2}
n^{-q_{\bgm}/2}
\end{split}
\end{equation}
(since $c_G$ is given by $n$).
Clearly \eqref{MG-BIC} and \eqref{MGG-BIC} correspond to \eqref{MG} and
\eqref{MGG}, respectively.
Hence we have the following result.
\begin{thm}\label{thm:BICrobust}
Assume the full model $\mcM_F$ and the submodel $\mcM_\bgm$ are
given by \eqref{full-model} and \eqref{submodel-gamma}, respectively.
Also assume their error terms, $\bep_F$ and $\bep_\bgm$ have a
unimodal \sss distribution \eqref{bep_sim_f} with the mean zero and the identity
covariance matrix.
Then the Bayes factor based on BIC 
for comparing each of $ \mcM_{\bgm}$ to the full model $ \mcM_F$ is given by
\begin{equation} \label{main-BIC}
\BF\gvf\BIC
=\frac{M_{\bgm}(\bm{y}|\mathrm{BIC})}{M_F(\bm{y}|\mathrm{BIC})}
=\BF^G\gvf\BIC
\end{equation}
where $\BF^G\gvf\BIC$ is the BIC based Bayes factor under 
Gaussian errors,
\begin{equation}\label{BF-G-BIC}
\BF^G\gvf\BIC
= \left\{\frac{(1-R_{\bgm}^2)^{-n}n^{-q_{\bgm}}}{(1-R_F^2)^{-n}n^{-p}}\right\}^{1/2}.
\end{equation}
\end{thm}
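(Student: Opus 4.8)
The plan is to read \eqref{main-BIC} off from the two identities \eqref{MG-BIC} and \eqref{MGG-BIC} established just before the statement, in exact parallel with the way Theorem \ref{thm:main-BF} is obtained from Lemmas \ref{relationship-G} and \ref{m-G}. The single fact that makes this work is that the scalar prefactor $c^{-n/2}f(c)/\{c_G^{-n/2}f_G(c_G)\}$ in \eqref{MG-BIC} does \emph{not} depend on the submodel: by \eqref{c-gamma}, $c$ is the root of $n/2+cf'(c)/f(c)=0$, an equation involving only the sample size $n$ and the common \sss density $f$, and $c_G$ (namely $n$) is likewise fixed across models. Hence $M_{\bgm}(\bm{y}|\mathrm{BIC})=\kappa\,M^G_{\bgm}(\bm{y}|\mathrm{BIC})$ for a constant $\kappa=\kappa(f,n)$ common to $\mcM_{\bgm}$ and $\mcM_F$ --- the BIC counterpart of the moment ratio $E[\|\bep_{\bgm}\|^\nu]/E[\|\bep_G\|^\nu]$ appearing in Lemma \ref{relationship-G}.

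Granting this, the proof reduces to a division: using \eqref{MG-BIC} and \eqref{MGG-BIC},
\begin{align*}
\frac{M_{\bgm}(\bm{y}|\mathrm{BIC})}{M_F(\bm{y}|\mathrm{BIC})}
&=\frac{\kappa\,M^G_{\bgm}(\bm{y}|\mathrm{BIC})}{\kappa\,M^G_F(\bm{y}|\mathrm{BIC})}
=\frac{\{\|\bm{y}-\bar{y}\bm{1}_n\|^2(1-R^2_{\bgm})\}^{-n/2}n^{-q_{\bgm}/2}}
{\{\|\bm{y}-\bar{y}\bm{1}_n\|^2(1-R^2_F)\}^{-n/2}n^{-p/2}} \\
&=\left\{\frac{(1-R^2_{\bgm})^{-n}n^{-q_{\bgm}}}{(1-R^2_F)^{-n}n^{-p}}\right\}^{1/2},
\end{align*}
where the common factors $\kappa$, $\|\bm{y}-\bar{y}\bm{1}_n\|^{-n}$, and the $n$-dependent constant $c_G^{-n/2}f_G(c_G)$ of \eqref{MGG-BIC} all cancel between numerator and denominator. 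The right-hand side is $\BF^G\gvf\BIC$ by \eqref{BF-G-BIC}, which gives \eqref{main-BIC}.

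For completeness I would also record why \eqref{MG-BIC}--\eqref{MGG-BIC} hold, i.e.\ why the maximization defining $\BIC_\bgm$ in \eqref{BIC-1} is attained at the stated point under \emph{unimodality} of $\bep_{\bgm}$. Unimodality forces $f$ to be nonincreasing, so for each fixed $\eta_{\bgm}>0$ the quantity $\eta_{\bgm}^{n/2}f(\eta_{\bgm}\|\bm{y}-\alpha\bm{1}_n-\bm{X}_{\bgm}\bbe_{\bgm}\|^2)$ is maximized over $(\alpha,\bbe_{\bgm})$ by minimizing the residual sum of squares, i.e.\ at the least squares fit $\hat{\alpha}=\bar{y}$ (the centering $\bm{x}'_i\bm{1}_n=0$ decoupling the intercept), $\hat{\bbe}_{\bgm}=(\bm{X}'_{\bgm}\bm{X}_{\bgm})^{-1}\bm{X}'_{\bgm}\bm{y}$, with residual norm $\|\bm{y}-\bar{y}\bm{1}_n\|^2(1-R^2_{\bgm})$; this inner maximizer does not depend on $\eta_{\bgm}$. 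Substituting $u=\eta_{\bgm}\|\bm{y}-\bar{y}\bm{1}_n\|^2(1-R^2_{\bgm})$ reduces the remaining problem to maximizing $u^{n/2}f(u)$ over $u>0$, with first-order condition precisely \eqref{c-gamma}, and running the identical one-dimensional maximization for $\mcM_F$ (same $f$, same $n$) returns the same root. I expect the only point needing genuine care to be a minor one: confirming that $\sup_{u>0}u^{n/2}f(u)$ is attained at an interior stationary point. Since $f(\|\bep\|^2)$ is an integrable density in $n$ dimensions, $\int_0^\infty u^{n/2-1}f(u)\,du<\infty$, which together with the monotonicity of $f$ forces $u^{n/2}f(u)\to0$ both as $u\to0^+$ and as $u\to\infty$, so the supremum is attained in $(0,\infty)$ and, where $f$ is differentiable, at a root of \eqref{c-gamma}. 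Everything after that is bookkeeping, and the distributional robustness asserted by \eqref{main-BIC} is precisely the cancellation of the common scalar $\kappa$ in the displayed ratio.
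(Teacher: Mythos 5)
Your proposal is correct and follows essentially the same route as the paper: the result is read off from \eqref{MG-BIC} and \eqref{MGG-BIC} by observing that the prefactor $c^{-n/2}f(c)/\{c_G^{-n/2}f_G(c_G)\}$ depends only on $f$ and $n$, not on the submodel, and therefore cancels in the ratio. Your added justification of why the maximization in \eqref{BIC-1} is attained at the least squares fit under unimodality, and why $\sup_{u>0}u^{n/2}f(u)$ is attained at an interior root of \eqref{c-gamma}, merely fills in details the paper leaves implicit.
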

Obviously \eqref{BF-G-BIC} corresponds to \eqref{main-BF}.
By Theorem \ref{thm:BICrobust}, 
the Bayes factor based on BIC is also independent of 
the error distribution provided each distribution is unimodal and is
the same for all models (c.f.~Theorem \ref{thm:main-BF}).
Note that $\BF^G\gvf\BIC$ is well defined if $\mcM_\bgm=\mcM_N$.

\section{The Laplace approximation of BF under Gaussian errors}
\label{sec:Laplace}
In Section \ref{sec:marginal}, we saw that the Bayes factor $\BF\gvf(\nu) $
under \sss errors is equal to
$\BF^G\gvf(\nu) $, which is the Bayes factor under Gaussian errors.
In this section, we consider the so-called Laplace approximation
of some Bayes factors under Gaussian errors.
We will approximate 
not only the function $\BF^G\gvf(\nu) $ but also
Bayes factors with respect to more general priors where the prior on $g$ is
\eqref{nu_a};
\begin{equation}\label{prior-Laplace.sec}
\pi(g;\{\nu,k\}) = g^{\nu/2-1}(1+g)^{-k/2}.
\end{equation}
When the same prior on $g$ is used for $ \mcM_{\bgm}$ and $ \mcM_F$, 
improper choices of $\nu$ ($0\leq \nu<q_{\bgm}$) as well as proper
choices of $\nu$ 
($-k<\nu<0$) are valid for use.
Under Gaussian errors, the Bayes factor for comparing
each of $ \mcM_{\bgm}$ to $ \mcM_F$ 
is well-defined as
\begin{equation} 
\label{BF-1}
\begin{split}
&\BF^G\gvf[\nu,k] \\
&=\frac{\int_0^{\infty} g^{\nu/2-1}(1+g)^{-k/2}(1+g)^{\frac{n-q_{\bgm}-1}{2}}
\{g(1-R_{\bgm}^2)+1 \}^{-\frac{n-1}{2}}\, dg}
{\int_0^{\infty} g^{\nu/2-1}(1+g)^{-k/2}(1+g)^{\frac{n-p-1}{2}}
\{g(1-R_F^2)+1 \}^{-\frac{n-1}{2}}\, dg}
\end{split}
\end{equation}
where $k\geq 0$, $-k<\nu<q_{\bgm}$.

First we provide a summary of Laplace approximations to the integral 
based on \cite{Tierney-Kadane-1986}.
For integrals of the form 
\begin{equation*}
 \int_{-\infty}^\infty \exp(h(\tau,n))d\tau,
\end{equation*}
we make the use of the fully exponential Laplace approximation, based on
expanding a smooth unimodal function $h(\tau,n)$ in a Taylor series expansion
about $\hat{\tau}$, the mode of $h(\tau,n)$.
The Laplace approximation is given by 
\begin{equation}\label{Laplace1}
\lim_{n \to \infty} 
\frac
{\int_{-\infty}^\infty \exp(h(\tau,n))d\tau}
{(2\pi)^{1/2}\hat{\sigma}_h\exp(h(\hat{\tau},n))}
=1
\end{equation}
where
\begin{equation*}
 \hat{\sigma}_h=
\left\{-\frac{\partial^2 h(\tau,n)}{\partial\tau^2} \Big|_{\tau=\hat{\tau}} \right\}^{-1/2}.
\end{equation*}
In the following, we will use the symbol $f(n) \approx g(n) $ ($n \to \infty$)
if 
\begin{equation}
 \lim_{n \to \infty} \frac{f(n)}{g(n)}=1.
\end{equation}
Hence the approximation given by \eqref{Laplace1} is written as
\begin{equation}
\int_{-\infty}^\infty \exp(h(\tau,n))d\tau \approx 
(2\pi)^{1/2}\hat{\sigma}_h \exp(h(\hat{\tau},n)) , \ (n \to \infty).
\end{equation}

The next result gives approximations of the Bayes factor
\eqref{main-BF} in terms of the Bayes factor based on
BIC given in \eqref{BF-G-BIC}.
\begin{thm} \label{thm:Laplace}
Let the prior be given by \eqref{prior-Laplace.sec}. 
Assume $\{\nu,k\}$ do not depend on $n$.
\begin{enumerate}
 \item Assume $-k<\nu<q$, and $0<r<1$. Then 
\begin{equation}\label{eq:thmLaplace1}
\int_0^{\infty} \frac{g^{\nu/2-1}}{(1+g^{-1})^{k/2}}
\frac{(1+g)^{\frac{n-q-1}{2}}}{{(1+rg)^{\frac{n-1}{2}}}} dg \\
 \approx  
\left\{\frac{4\pi \varphi(q-\nu,r)}{n^{q-\nu}r^n}\right\}^{1/2},
\end{equation}
where $ \varphi(s,r)= rs^{s-1}\{(1/r-1)e\}^{-s}$.
\label{thm:Laplace:part1}
\item 
Assume that $-k<\nu<q_{\bgm}$,
that $R^2_F$ is strictly less than $1$
and that $\mcM_{\bgm}\neq\mcM_N$.
Then $ \BF^G\gvf[\nu,k] \approx \widetilde{\BF}^G\gvf(\nu)$ where
\begin{equation}
 \widetilde{\BF}^G\gvf(\nu)  =
\left\{
\frac{\varphi(q_{\bgm}-\nu,1-R_{\bgm}^2)}{\varphi(p-\nu,1-R_F^2)}\right\}^{1/2}\BF^G\gvf\BIC 
\end{equation} 
and $ \BF^G\gvf\BIC$ is the $\mathrm{BIC}$ based alternative under Gaussian errors.
\label{thm:Laplace:part2}
\end{enumerate}
\end{thm}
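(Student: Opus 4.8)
The plan is to prove part \ref{thm:Laplace:part1} by a direct application of the Laplace approximation \eqref{Laplace1}, and then deduce part \ref{thm:Laplace:part2} by applying part \ref{thm:Laplace:part1} to the numerator and to the denominator of $\BF^G\gvf[\nu,k]$ in \eqref{BF-1} and dividing.

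For part \ref{thm:Laplace:part1} the first move is a change of variable. The integrand on the left of \eqref{eq:thmLaplace1} is maximized at a value of $g$ of order $n$, with spread also of order $n$, so \eqref{Laplace1} cannot be applied in the variable $g$; instead I substitute $g=e^{\tau}$, which recasts the integral as $\int_{-\infty}^{\infty}\exp(h(\tau,n))\,d\tau$ with
\[
h(\tau,n)=\frac{\nu}{2}\tau-\frac{k}{2}\ln(1+e^{-\tau})+\frac{n-q-1}{2}\ln(1+e^{\tau})-\frac{n-1}{2}\ln(1+re^{\tau}).
\]
The hypotheses $-k<\nu<q$ and $0<r<1$ are exactly what make $h(\cdot,n)$ smooth with $\partial h/\partial\tau\to(\nu+k)/2>0$ as $\tau\to-\infty$ and $\partial h/\partial\tau\to(\nu-q)/2<0$ as $\tau\to+\infty$, so that for $n$ large there is a unique interior mode $\hat{\tau}_n$; solving $\partial h/\partial\tau=0$ asymptotically gives $\hat{\tau}_n=\ln\{n(1-r)/(r(q-\nu))\}+o(1)$.

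The second step is the local expansion. Put $s=q-\nu$ and $\tau=\ln n+\xi$; as $n\to\infty$ with $\xi$ fixed, the terms of $h$ that grow with $n$ either are independent of $\xi$ or cancel in pairs (because $(n-q-1)/2-(n-1)/2=O(1)$), leaving, uniformly on compact $\xi$-sets,
\[
h(\ln n+\xi,n)=-\frac{s}{2}\ln n-\frac{n-1}{2}\ln r+\Phi(\xi)+o(1),\qquad\Phi(\xi)=-\frac{s}{2}\xi-\frac{1-r}{2r}e^{-\xi}.
\]
The maximizer of the strictly concave $\Phi$ is $\hat\xi=\ln\{(1-r)/(rs)\}$, with $\Phi(\hat\xi)=-\frac{s}{2}\hat\xi-\frac{s}{2}$ and $-\Phi''(\hat\xi)=\frac{s}{2}$, hence $h(\hat\tau_n,n)=-\frac{s}{2}\ln n-\frac{n-1}{2}\ln r-\frac{s}{2}\ln\frac{1-r}{rs}-\frac{s}{2}+o(1)$ and $\hat\sigma_h=\{-\partial^2 h/\partial\tau^2|_{\hat\tau_n}\}^{-1/2}\to(2/s)^{1/2}$. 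Inserting these into $(2\pi)^{1/2}\hat\sigma_h\exp(h(\hat\tau_n,n))$ from \eqref{Laplace1}, and using $r^{-(n-1)/2}=r^{1/2}(r^{n})^{-1/2}$ together with $\varphi(s,r)=s^{s-1}r^{s+1}(1-r)^{-s}e^{-s}$, the expression collapses exactly to $\{4\pi\varphi(q-\nu,r)/(n^{q-\nu}r^{n})\}^{1/2}$, which is \eqref{eq:thmLaplace1}.

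For part \ref{thm:Laplace:part2}, apply part \ref{thm:Laplace:part1} with $(q,r)\mapsto(q_\bgm,1-R_\bgm^2)$ to the numerator of \eqref{BF-1} and with $(q,r)\mapsto(p,1-R_F^2)$ to the denominator; the hypotheses $0<R_\bgm^2\le R_F^2<1$ and $-k<\nu<q_\bgm\le p$ make both applications valid. On dividing, the factors $4\pi$ cancel, the powers of $n$ combine as $n^{-(q_\bgm-\nu)/2}/n^{-(p-\nu)/2}=(n^{p}/n^{q_\bgm})^{1/2}$ (the exponent $\nu$ cancelling), and the factors $(1-R_\bgm^2)^{-n/2}$, $(1-R_F^2)^{-n/2}$ combine as $\{(1-R_F^2)^{n}/(1-R_\bgm^2)^{n}\}^{1/2}$; these two surviving factors together are exactly $\BF^G\gvf\BIC$ of \eqref{BF-G-BIC}, while the leftover $\{\varphi(q_\bgm-\nu,1-R_\bgm^2)/\varphi(p-\nu,1-R_F^2)\}^{1/2}$ gives $\widetilde{\BF}^G\gvf(\nu)$. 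The genuine obstacle is confined to part \ref{thm:Laplace:part1}: establishing that $h(\cdot,n)$ is truly unimodal for large $n$, locating $\hat\tau_n$ precisely enough, and — the delicate point — justifying the passage to the limit inside the integral via a dominating function valid uniformly in $n$ over the whole real line (the behaviour as $\tau\to-\infty$, governed by the sign of $(\nu+k)/2-1$, and as $\tau\to+\infty$, governed by $(q-\nu)/2>0$, must both be controlled), for which the constraints $-k<\nu<q$ and $0<r<1$ are exactly calibrated. Part \ref{thm:Laplace:part2} is then purely algebraic.
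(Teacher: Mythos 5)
Your proposal is correct and follows essentially the same route as the paper: substitute $\tau=\log g$, apply the fully exponential Laplace approximation of \eqref{Laplace1} to $h(\tau,n)$ with mode $\hat{z}=e^{\hat\tau}$ satisfying $\hat{z}/n\to(1/r-1)/(q-\nu)$, and obtain part \ref{thm:Laplace:part2} by taking the ratio of the two approximated integrals. Your recentering $\tau=\ln n+\xi$ and extraction of the limit profile $\Phi(\xi)$ is only a presentational variant of the paper's direct evaluation of $h$ and $h''$ at the root of $\partial h/\partial\tau=0$, and your closing remark about tail domination correctly identifies the one point that the paper (which simply invokes Tierney--Kadane) also leaves implicit.
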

\begin{proof}
See Appendix \ref{app:Laplace}.
\end{proof}
Clearly the function $\varphi$ does not depend on $n$ and hence 
Theorem \ref{thm:Laplace} shows that $ \BF^G\gvf[\nu,k]$ 
is asymptotically equivalent to BIC with a simple $O(1)$ correction function
depending $\nu$ 
as well as $\{p,q_{\bgm}\}$ and the $R$-squares.
Although several fully Bayes factors for the variable selection problem
have been proposed in the literature, the relationship between the approximate Bayes
factors and naive BIC has not been shown to the authors' knowledge.
In this sense, while the main contributions in this paper 
are given in Section \ref{sec:marginal}, 
Theorem \ref{thm:Laplace} 
may be a practically useful contribution because of the simplicity of the 
approximate Bays factor.

In this section, we have considered general Bayes factors under 
Gaussian errors. Recall, by \eqref{BF-1}, that $\BF\gvf(\nu)$ the Bayes factor w.r.t.~sub-harmonic
priors, under \sss errors, is equal to $\BF^G\gvf[\nu,0]$
for $0<\nu<1$. 
Under Gaussian errors, 
\cite{Liang-etal-2008} recommended
the use of $ \BF^G\gvf[\nu,2-\nu]$ with $ -2<\nu<0$.
\cite{Guo-Speckman-2009} and \cite{Celeux-Anbari-Marin-Robert-2012}
recommended the use of $ \BF^G\gvf[0,2]$.
these Bayes factors may be approximated as follows.
\begin{corollary}\label{cor:approxBF}
\begin{equation*}
\begin{split}
&   \BF\gvf(\nu)=\BF^G\gvf[\nu,0] \approx \widetilde{\BF}^G\gvf(\nu) 
\mbox{ for  } 0<\nu<1. \quad \mbox{sub-harmonic prior}\\
&   \BF^G\gvf[0,2] \approx \widetilde{\BF}^G\gvf(0). \quad
\mbox{a version of \cite{Guo-Speckman-2009}}\\
&   \BF^G\gvf[\nu,2-\nu] \approx \widetilde{\BF}^G\gvf(\nu) 
\mbox{ for  } -2<\nu<0. \quad \mbox{\cite{Liang-etal-2008}} 
\end{split}
\end{equation*}
\end{corollary}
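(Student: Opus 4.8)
The plan is to obtain Corollary \ref{cor:approxBF} as an immediate consequence of Theorem \ref{thm:Laplace}, Part \ref{thm:Laplace:part2}, once each of the three Bayes factors displayed in the statement is recognized as an admissible instance of $\BF^G\gvf[\nu,k]$ from \eqref{BF-1}. First I would record the identification in the sub-harmonic case: by Theorem \ref{thm:main-BF} we have $\BF\gvf(\nu)=\BF^G\gvf(\nu)$, and comparing \eqref{main-BF} with \eqref{BF-1} and noting that the factor $(1+g^{-1})^{-k/2}$ is identically $1$ when $k=0$ shows $\BF^G\gvf(\nu)=\BF^G\gvf[\nu,0]$; thus the first line is the case $(\nu,k)=(\nu,0)$ with $0<\nu<1$. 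The remaining two lines are, by construction, the cases $(\nu,k)=(0,2)$ (a version of \cite{Guo-Speckman-2009}) and $(\nu,k)=(\nu,2-\nu)$ with $-2<\nu<0$ (\cite{Liang-etal-2008}).

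Next I would verify that the hypotheses of Theorem \ref{thm:Laplace}, Part \ref{thm:Laplace:part2}, are met in each of the three cases. In all three, $\nu$ and $k$ are fixed constants, hence independent of $n$ and of the submodel $\mcM_\bgm$, and the genericity condition $0<R^2_\bgm\leq R^2_F<1$ is the standing assumption there. The only point requiring a line of checking is the admissibility range $-k<\nu<q_\bgm$ for every submodel in play: since the null model has been excluded, $q_\bgm\geq 1$, so for $(\nu,0)$ with $0<\nu<1$ we get $0<\nu<1\leq q_\bgm$; for $(0,2)$ we get $-2<0<q_\bgm$; and for $(\nu,2-\nu)$ with $-2<\nu<0$ we get $\nu-2<\nu<0<q_\bgm$. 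With these verified, Theorem \ref{thm:Laplace}, Part \ref{thm:Laplace:part2}, applies verbatim and yields each of the three approximations, the common approximant $\widetilde{\BF}^G\gvf(\nu)$ being independent of $k$ --- which is precisely why the three different second-stage priors proposed in the literature all reduce, to leading order, to the same $O(1)$ rational correction of $\BF^G\gvf\BIC$.

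There is no real obstacle here: the corollary is a bookkeeping consequence of Theorem \ref{thm:Laplace}, whose proof (carried out in the Appendix via the Tierney--Kadane fully exponential Laplace approximation \eqref{Laplace1}) contains all of the analytic content. The one subtlety worth a remark is that for the choices with $k>0$ the second-stage prior \eqref{nu_a} need not be proper --- it is proper on $(0,\infty)$ only when $-k<\nu<0$ --- but this is already subsumed in the hypotheses of Theorem \ref{thm:Laplace}, which require only $-k<\nu<q_\bgm$ and do not demand propriety of the prior on $g$.
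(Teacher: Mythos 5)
Your proposal is correct and follows exactly the route the paper intends: the corollary is stated in the text as an immediate consequence of Theorem \ref{thm:Laplace} (part \ref{thm:Laplace:part2}), obtained by substituting the three parameter pairs $(\nu,0)$, $(0,2)$, and $(\nu,2-\nu)$ into $\BF^G\gvf[\nu,k]$ and checking the admissibility range $-k<\nu<q_\bgm$, which is precisely your bookkeeping. Your added remarks --- that $\widetilde{\BF}^G\gvf(\nu)$ is independent of $k$ and that propriety of the prior on $g$ is not required --- are accurate and consistent with the paper's discussion.
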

In Section \ref{sec:ex}, we will see how 
approximate Bayes factors $\widetilde{\BF}^G\gvf(\nu)$
work numerically and how sensitive they are to to the choice of $\nu$.

\section{Model selection consistency} \label{sec:consistency}
In this section, we consider model selection consistency in the case
where $p$ is fixed and as $n$ approaches infinity.
Let $\mcM_T$ be the true model,
\begin{equation*}
 \bm{y}=\alpha_T \bm{1}_n+\bm{X}_T\bbe_T+\sigma_T \bep.
\end{equation*}
Consistency for model choice is defined as
\begin{equation*}
\plim_{n \to \infty}
\mathrm{Pr}(\mcM_T |\bm{y})=1, 
\end{equation*}
where plim denotes convergence in probability and the probability
distribution is the sampling distribution under the true model
$\mcM_T$.
We will show that the Bayes factors considered in the previous sections
have model selection consistency under generally \sss errors.
The consistency property is clearly equivalent to
\begin{equation} \label{equiv-consistency}
\plim_{n \to \infty}
\BF_{\gamma:T}=
\plim_{n \to \infty}
\frac{\BF\gvf}
{\BF_{T:F}}
=0  \quad \forall \bgm \neq T.
\end{equation}
For model selection consistency, we make the following assumptions;
\begin{enumerate}[{A}1.]
\item  
$U_n=\|\bep\|^2/n$ is bounded 
in probability from below and from above, that is,
for any $c>0$ and any positive integer $n$, 
there exists an $M$ such that
\begin{equation*}
\textstyle{\mathrm{Pr}\left(M^{-1}< U_n < M\right) >1-c}.
\end{equation*} \label{AS1} 
\item 
The limit of the correlation matrix of $x_1,\dots,x_p$, 
$\lim_{n \to \infty}\bm{X}'_F\bm{X}_F/n$, exists and is positive definite. 
\label{AS2}
\end{enumerate}
A\ref{AS1} seems more general than necessary.
It appears that, by the law of large numbers,
$U_n$ ought to converge to $1$ in probability, but
this is not necessarily true if the error distribution is not Gaussian
since in that case the errors are not independent.
In the case of a scale mixture of Gaussians, $U_n$ approaches, in law,
a random variable $\xi$ which has the distribution of the mixing variable
of the variance. Even when the error distribution is not a scale
mixture of Gaussians, A\ref{AS1} appears to be a reasonable
and minimal assumption.
A\ref{AS2} is the standard assumption which also appears in \cite{Knight-Fu-2000}
and \cite{Zou-2006}. 
Under these mild assumptions, 
we have following preliminary results for proving the consistency.

\begin{lemma} \label{lem:prel}
Assume A\ref{AS1} and A\ref{AS2}. 
\begin{enumerate}
 \item \label{1:lem:prel}
Assume $\mcM_\bgm\neq\mcM_N$.
For any $0<k<1$ and any positive integer $n$, 
there exists a $c_1(\bgm, k)>2$ such that
\begin{equation}
 \mathrm{Pr}\left(\frac{1}{c_1(\bgm, k)} < R^2_{\bgm}
<1-\frac{1}{c_1(\bgm, k)}\right)>1-k.
\end{equation}
\item \label{2:lem:prel}
Let $\bgm \supsetneq T$. 
Then $ (1-R_T^2)/(1-R_{\bgm}^2)\geq 1$. Further 
for any $0<k<1$ and any positive integer $n$, 
there exists a $c_2(\bgm,T,k)>0 $ such that
\begin{equation}
 \mathrm{Pr}\left(1 \leq
\left(\frac{1-R_T^2}{1-R_{\bgm}^2}\right)^n<
1+c_2(\bgm,T,k)\right)>1-k.
\end{equation}
\item \label{3:lem:prel}
Let $\bgm \nsupseteq T$. Then 
for any $0<k<1$ and any positive integer $n$, 
there exists a $c_3(\bgm,T,k)>1$ such that
\begin{equation}
\mathrm{Pr}\left(\frac{1-R_T^2}{1-R_{\bgm}^2}<1-\frac{1}{c_3(\bgm,T,k)} \right)
>1-k.
\end{equation}
\end{enumerate}
\end{lemma}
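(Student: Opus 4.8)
The plan is to recast each coefficient of determination as a ratio of residual sums of squares and then to separate the deterministic ``signal'' contribution, controlled by A\ref{AS2}, from the random ``noise'' contribution, controlled by A\ref{AS1} together with the special structure of spherically symmetric errors. Write $P_\bgm$ for the orthogonal projection onto $\mathrm{col}(\bm{1}_n | \bm{X}_\bgm)$ and $P_N$ for the projection onto the span of $\bm{1}_n$; since the columns of $\bm{X}_F$ are centered, $P_N\bm{X}_\bgm=\bm{0}$ and $P_\bgm=P_N+Q_\bgm Q_\bgm'$ with $Q_\bgm$ an orthonormal basis of $\mathrm{col}(\bm{X}_\bgm)$. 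Under $\mcM_T$ we have $\mathrm{RSS}_\bgm=\|(I-P_\bgm)\bm{y}\|^2=\|(I-P_\bgm)(\bm{X}_T\bbe_T+\sigma_T\bep)\|^2$ and $\mathrm{TSS}:=\|\bm{y}-\bar y\bm{1}_n\|^2$, so $1-R^2_\bgm=\mathrm{RSS}_\bgm/\mathrm{TSS}$. Two building blocks will be used repeatedly. (i) By A\ref{AS2}, $\bm{X}'_\bgm\bm{X}_\bgm/n$ and $\bm{X}'_\bgm\bm{X}_T/n$ converge, so $\|(I-P_\bgm)\bm{X}_T\bbe_T\|^2/n\to b_\bgm\ge0$, a nonnegative quadratic form in $\bbe_T$ built from the limiting covariance $\Sigma:=\lim\bm{X}'_F\bm{X}_F/n$; using $\Sigma\succ0$ and the minimality of $\mcM_T$ (no component of $\bbe_T$ vanishes), $b_\bgm=0$ when $\bgm\supseteq T$ and $b_\bgm>0$ when $\bgm\nsupseteq T$. (ii) For a fixed orthogonal projection $Q$ of rank $r$, spherical symmetry gives $\|Q\bep\|^2=\|\bep\|^2B$ with $B\sim\mathrm{Beta}(r/2,(n-r)/2)$ independent of $\|\bep\|^2$, whence $E[\|Q\bep\|^2]=E[\|\bep\|^2]\,E[B]=n\cdot(r/n)=r$ and so $\|Q\bep\|^2=O_p(1)$ uniformly in $n$ by Markov's inequality; similarly $E[(\bm{w}'\bep)^2]=\|\bm{w}\|^2$ for fixed $\bm{w}$, so $\bm{w}'\bep=O_p(\|\bm{w}\|)$.

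Part \ref{1:lem:prel} is the most routine. Since $\bm{X}_T\bbe_T\in\mathrm{col}(\bm{X}_F)$, we may start from the sandwich $\mathrm{RSS}_F\le\mathrm{RSS}_\bgm\le\mathrm{RSS}_N=\mathrm{TSS}$. Here $\mathrm{RSS}_F=\sigma_T^2(\|\bep\|^2-\|P_F\bep\|^2)$, which by A\ref{AS1} and (ii) exceeds $cn$ with probability $\ge1-k/2$ for a suitable $c>0$; and $\mathrm{RSS}_\bgm/n\le\mathrm{TSS}/n$ is bounded above with probability $\ge1-k/2$ by the triangle inequality together with A\ref{AS1}--A\ref{AS2}, which also bound $\mathrm{TSS}/n$ away from $0$. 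This keeps $1-R^2_\bgm=\mathrm{RSS}_\bgm/\mathrm{TSS}$ away from $0$; the complementary bound $1-R^2_\bgm<1-1/c_1$ is obtained by controlling $\mathrm{TSS}-\mathrm{RSS}_\bgm=\|(P_\bgm-P_N)\bm{y}\|^2$ below by a fixed positive multiple of $\mathrm{TSS}$, which via (i) and (ii) comes down to the squared norm of the projection of the signal $\bm{X}_T\bbe_T$ onto $\mathrm{col}(\bm{X}_\bgm)$ being of exact order $n$. Choosing $c_1(\bgm,k)$ large enough for all the (finitely many) high-probability events involved finishes this part.

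For part \ref{2:lem:prel} with $\bgm\supsetneq T$, nesting gives $\mathrm{RSS}_\bgm\le\mathrm{RSS}_T$, hence $(1-R^2_T)/(1-R^2_\bgm)=\mathrm{RSS}_T/\mathrm{RSS}_\bgm\ge1$. Because the columns of $\bm{X}_T$ are among those of $\bm{X}_\bgm$, the signal lies in both model spaces and cancels in the difference: $\mathrm{RSS}_T-\mathrm{RSS}_\bgm=\sigma_T^2\|(P_\bgm-P_T)\bep\|^2$, which by (ii) with $r=q_\bgm-q_T$ equals $\sigma_T^2\|\bep\|^2B$ with $E[B]$ of order $1/n$, hence is $O_p(1)$ uniformly in $n$; meanwhile $\mathrm{RSS}_\bgm=\sigma_T^2(\|\bep\|^2-\|P_\bgm\bep\|^2)\ge cn$ with high probability by A\ref{AS1} and (ii). Therefore $(1-R^2_T)/(1-R^2_\bgm)=1+W_n/n$ with $W_n=O_p(1)$, so on the event $\{W_n\le w\}$, of probability $\ge1-k$ once $w$ is large, $\{(1-R^2_T)/(1-R^2_\bgm)\}^n\le(1+w/n)^n\le e^w$, and one takes $c_2(\bgm,T,k)=e^w-1$. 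I expect \textbf{this to be the crux of the lemma}: one needs the projected-noise increment $\|(P_\bgm-P_T)\bep\|^2$ to be $O_p(1)$ rather than merely $o_p(n)$ in order that the $n$th power stay bounded, and this sharp rate is exactly what the Beta representation together with A\ref{AS1} delivers --- no law of large numbers, which would trivialize the argument, is available for general spherically symmetric errors.

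For part \ref{3:lem:prel} with $\bgm\nsupseteq T$, expand $\mathrm{RSS}_\bgm=\|(I-P_\bgm)\bm{X}_T\bbe_T\|^2+2\sigma_T((I-P_\bgm)\bm{X}_T\bbe_T)'\bep+\sigma_T^2\|(I-P_\bgm)\bep\|^2$. Dividing by $n$, the first term tends to $b_\bgm>0$ by (i), the cross term is $O_p(\sqrt n)/n=o_p(1)$ by (ii), and $\|(I-P_\bgm)\bep\|^2/n=U_n-\|P_\bgm\bep\|^2/n=U_n-o_p(1)$; since $(I-P_T)\bm{X}_T\bbe_T=\bm{0}$, likewise $\mathrm{RSS}_T/n=\sigma_T^2(U_n-o_p(1))$. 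Hence $\mathrm{RSS}_\bgm/n-\mathrm{RSS}_T/n=b_\bgm+o_p(1)\ge b_\bgm/2$ with probability $\ge1-k/2$, while $\mathrm{RSS}_\bgm/n\le B_0$ with probability $\ge1-k/2$ by A\ref{AS1}; on the intersection, $(1-R^2_T)/(1-R^2_\bgm)=\mathrm{RSS}_T/\mathrm{RSS}_\bgm\le1-(b_\bgm/2)/B_0<1$, yielding $c_3(\bgm,T,k)$. The only non-local ingredient here is the strict positivity $b_\bgm>0$, which combines $\Sigma\succ0$ with the minimality of the true model.
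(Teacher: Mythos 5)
Your proposal is correct and follows essentially the same route as the paper: the same signal/cross/noise decomposition of $1-R^2_{\bgm}$, with A\ref{AS2} controlling the signal quadratic forms, A\ref{AS1} controlling $\|\bep\|^2/n$, and the exact Beta law of projected-noise ratios under spherical symmetry supplying the sharp $O_p(1)$ (not merely $o_p(n)$) control that makes the $n$th power in part \ref{2:lem:prel} bounded — which you rightly identify as the crux, and which the paper handles by bounding the ratio by $W_{\bgm}^{-1}$ with $W_{\bgm}\sim Be(\{n-q_{\bgm}-1\}/2,q_{\bgm}/2)$ and showing $W_{\bgm}^{-n}\Rightarrow e^{\chi^2_{q_{\bgm}}}$ via the chi-square representation. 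Your variant for that step (Markov on the Beta mean plus $(1+w/n)^n\le e^w$) is a cosmetic difference, and your part \ref{1:lem:prel} even reproduces the paper's own reliance on the signal's projection onto $\mathrm{col}(\bm{X}_{\bgm})$ being of exact order $n$.
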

\begin{proof}
 See Appendix \ref{app:prel}.
\end{proof}
First we give a consistency result on BIC.
\begin{thm}\label{thm:BICconsistency}
Assume A\ref{AS1} and A\ref{AS2}. 
The Bayes factor based on BIC under Gaussian errors
\begin{equation*}
\BF^G\gvf\BIC
= \left\{\frac{(1-R_{\bgm}^2)^{-n}n^{-q_{\bgm}}}{(1-R_F^2)^{-n}n^{-p}}\right\}^{1/2}
\end{equation*}
is consistent for model selection under \sss errors (including $\mcM_\bgm=\mcM_N$).
\end{thm}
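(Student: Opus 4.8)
The plan is to verify the consistency criterion \eqref{equiv-consistency}, i.e.\ to show that $\plim_{n\to\infty}\BF^G_{\bgm:T}\BIC=\plim_{n\to\infty}\BF^G\gvf\BIC/\BF^G_{T:F}\BIC=0$ for every $\bgm\neq T$. Writing out the ratio, the normalizing $n^{-p}$ and $(1-R_F^2)^{-n}$ factors cancel, leaving
\begin{equation*}
\BF^G_{\bgm:T}\BIC=\left\{n^{q_T-q_\bgm}\left(\frac{1-R_T^2}{1-R_\bgm^2}\right)^{-n}\right\}^{1/2}.
\end{equation*}
So the whole argument reduces to controlling the two random quantities $n^{q_T-q_\bgm}$ and $\bigl((1-R_T^2)/(1-R_\bgm^2)\bigr)^n$ in probability, and Lemma~\ref{lem:prel} is tailored exactly for this. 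I would split into the usual three cases according to how $\bgm$ relates to the true model $T$.

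First, the \emph{overfitting} case $\bgm\supsetneq T$ (and more generally $\bgm\supseteq T$, $\bgm\neq T$). Here $q_\bgm>q_T$, so $n^{q_T-q_\bgm}\to 0$ at a polynomial rate. By part~\ref{2:lem:prel} of Lemma~\ref{lem:prel}, $\bigl((1-R_T^2)/(1-R_\bgm^2)\bigr)^n$ is bounded in probability (it lies in $[1,1+c_2)$ with probability exceeding $1-k$), so its reciprocal raised to the $1/2$ power is also $O_p(1)$. The product of an $o(1)$ deterministic factor and an $O_p(1)$ factor is $o_p(1)$, which gives the claim. Second, the \emph{underfitting / non-nested} case $\bgm\nsupseteq T$. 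By part~\ref{3:lem:prel}, $(1-R_T^2)/(1-R_\bgm^2)<1-1/c_3$ with probability exceeding $1-k$, so $\bigl((1-R_T^2)/(1-R_\bgm^2)\bigr)^{-n}$ grows at least like $(1-1/c_3)^{-n}$, i.e.\ exponentially, on that event. The polynomial factor $n^{q_T-q_\bgm}$ is at most $n^{q_T}$, which is negligible against exponential growth, so again the product tends to $0$ in probability. Combining the cases (and noting the trivial case $\bgm=T$) establishes \eqref{equiv-consistency} and hence consistency; the inclusion of $\mcM_N$ is automatic since $\mcM_N$ falls under $\bgm\nsupseteq T$ whenever $T\neq N$, and is handled by the $\check R^2$ reformulation of Remark~\ref{rem:null} if one wants $T=N$ allowed.

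To turn the ``with probability exceeding $1-k$'' statements into genuine convergence in probability I would fix $\varepsilon>0$, choose $k$ small, and observe that on the high-probability event from Lemma~\ref{lem:prel} the quantity $\BF^G_{\bgm:T}\BIC$ is bounded by a deterministic sequence tending to $0$; hence for $n$ large it is below $\varepsilon$ on that event, so $\mathrm{Pr}(\BF^G_{\bgm:T}\BIC>\varepsilon)<k$ eventually, and $k$ was arbitrary. Since there are only finitely many models $\bgm\neq T$ (as $p$ is fixed), a union bound over them is harmless. The one genuinely substantive input is Lemma~\ref{lem:prel} itself — in particular part~\ref{3:lem:prel}, which quantifies the bias of an underfitted model; but that is proved separately in the Appendix under assumptions A\ref{AS1}--A\ref{AS2}, so within this proof the only real work is bookkeeping. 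The mildest subtlety to watch is that the bounds in Lemma~\ref{lem:prel} are stated for each fixed $n$ with an $n$-dependent constant; I would note that in the overfitting case one needs the constant $c_2(\bgm,T,k)$ to be taken uniform in $n$ (which the Appendix proof of the Lemma should deliver, since $R_T^2\to R_\bgm^2$ there), so that $\bigl((1-R_T^2)/(1-R_\bgm^2)\bigr)^n$ is tight rather than merely bounded for each $n$. Granting that, the argument goes through without further obstacle.
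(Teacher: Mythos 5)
Your overall strategy is exactly the paper's: reduce consistency to \eqref{equiv-consistency}, cancel the full-model factors, and split into the cases $\bgm\supsetneq T$ and $\bgm\nsupseteq T$, handled by parts \ref{2:lem:prel} and \ref{3:lem:prel} of Lemma \ref{lem:prel} respectively. However, your central display has a sign error that makes the underfitting case internally inconsistent as written. The correct reduction is
\begin{equation*}
\frac{\BF^G\gvf\BIC}{\BF^G_{T:F}\BIC}
=\left\{\frac{(1-R_{\bgm}^2)^{-n}n^{-q_{\bgm}}}{(1-R_T^2)^{-n}n^{-q_T}}\right\}^{1/2}
=\left\{n^{q_T-q_{\bgm}}\left(\frac{1-R_T^2}{1-R_{\bgm}^2}\right)^{n}\right\}^{1/2},
\end{equation*}
with exponent $+n$, not $-n$. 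In the overfitting case this does not matter: by part \ref{2:lem:prel} the ratio raised to either power $\pm n$ is bounded above and below in probability, and $n^{q_T-q_{\bgm}}\to 0$ since $q_{\bgm}>q_T$. But for $\bgm\nsupseteq T$ you invoke part \ref{3:lem:prel} to get $(1-R_T^2)/(1-R_{\bgm}^2)<1-1/c_3$ with high probability, raise it to the power $-n$, obtain a quantity that \emph{grows} exponentially, and nevertheless conclude that its product with the polynomial factor tends to zero. That inference is false for the formula you wrote down; it is the correct conclusion only for the correct formula, in which $\bigl((1-R_T^2)/(1-R_{\bgm}^2)\bigr)^{n}$ decays exponentially and swamps $n^{q_T-q_{\bgm}}$ whatever its sign. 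Once the exponent is fixed, your argument is complete and coincides with the paper's proof, including the $\varepsilon$--$k$ bookkeeping you sketch at the end (which the paper leaves implicit).

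Two smaller points. Your worry about uniformity in $n$ of $c_2(\bgm,T,k)$ is resolved by the lemma itself: the constant carries no $n$ argument, and the appendix proof bounds $\bigl((1-R_T^2)/(1-R_{\bgm}^2)\bigr)^{n}$ by $W_{\bgm}^{-n}$, which is tight because it converges in law to $\exp(\chi^2_{q_{\bgm}})$. Also, the BIC factor needs no recourse to the $\check R^2$ reformulation of Remark \ref{rem:null}: $R_N^2\equiv 0$ is perfectly admissible in $\BF^G\gvf\BIC$, and if $T=\mcM_N$ every competing model falls under the overfitting case with $q_T=0$, which is precisely why the theorem can include the null model while Corollary \ref{cor:consistency} cannot.
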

\begin{proof}
We have only to show that
\begin{equation} \label{main-plim}
\plim_{n \to \infty}
\frac{\BF^G\gvf\BIC}{\BF^G_{T:F}\BIC}=
\plim_{n \to \infty}
\left\{ n^{q_T-q_{\bgm}} 
\left(\frac{1-R^2_T}{1-R^2_{\bgm}}\right)^{n}
\right\}^{1/2}=0.
\end{equation}
Consider the following two situations:
\begin{enumerate}
\item 
$\bgm \supsetneq T$: 
By part \ref{2:lem:prel} of Lemma \ref{lem:prel}, 
$\{(1-R^2_T)/(1-R^2_{\bgm})\}^{n}$ is bounded in probability.
Since $ q_{\bgm}> q_T$, \eqref{main-plim} is satisfied.
\item
$\bgm \nsupseteq T$: 
\quad
By part \ref{3:lem:prel} of Lemma \ref{lem:prel}, 
$ (1-R^2_T)/(1-R^2_{\bgm})$ 
is strictly less than $1$ in probability. 
Hence $ \{(1-R^2_T)/(1-R^2_{\bgm})\}^n$
converges to zero in probability
exponentially fast with respect to $n$.
Therefore, no matter what value $ q_T-q_{\bgm}$ takes,
\eqref{main-plim} is satisfied.
\end{enumerate}
These complete the proof.
\end{proof}
Note that in Theorem \ref{thm:BICconsistency}
 we do not exclude the null model $\mcM_N$ and hence
BIC has model selection consistency even when the null model is true.
When we consider consistency of the Bayes factors
treated in the previous sections, 
$\BF\gvf(\nu) $, $\BF^G\gvf[\nu,k] $, $ \widetilde{\BF}^G\gvf(\nu)$,
we have to exclude the null model $\mcM_N$, but they all still have
model selection consistency among non-null models.
\begin{corollary} \label{cor:consistency}
Assume A\ref{AS1} and A\ref{AS2}. $\{\nu,a\}$ is assumed independent of $n$ and
$\mcM_\bgm$. 
Assume also $\mcM_N$ is excluded from possible models.
Then 
\begin{enumerate}
\item $\widetilde{\BF}^G\gvf(\nu)$ for $ \nu<1$
is consistent for model selection under \sss errors.
\item $\BF^G\gvf[\nu,k]$ for $-k<\nu<1$ 
is consistent for model selection under \sss errors.
\item 
$\BF\gvf(\nu)$ for $0<\nu<1$ 
is consistent for model selection under \sss errors.
\end{enumerate}
\end{corollary}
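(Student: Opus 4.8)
The plan is to reduce Corollary~\ref{cor:consistency} to Theorem~\ref{thm:BICconsistency} by controlling the ratio of correction factors that multiply the BIC-based Bayes factor. Recall from \eqref{equiv-consistency} that consistency is equivalent to $\plim_{n\to\infty}\BF_{\bgm:T}=0$ for all $\bgm\neq T$, where $\BF_{\bgm:T}=\BF\gvf/\BF_{T:F}$. Since $\mcM_N$ is excluded, parts (1)--(3) all concern non-null submodels, so Lemma~\ref{lem:prel} applies in full. The three statements are nested in difficulty: (3) follows from (2) by taking $k=0$ (so that $\BF\gvf(\nu)=\BF^G\gvf[\nu,0]$ for $0<\nu<1$ by Theorem~\ref{thm:main-BF}), and I would argue (1) as a parallel, somewhat easier, computation using the closed-form $\widetilde{\BF}^G\gvf(\nu)$. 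So the real work is part (2).

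For part (2), I would write
\begin{equation*}
\frac{\BF^G\gvf[\nu,k]}{\BF^G_{T:F}[\nu,k]}
=\frac{I_{\bgm}(n)}{I_{T}(n)},\qquad
I_{\bgm}(n)=\int_0^\infty g^{\nu/2-1}(1+g^{-1})^{-k/2}(1+g)^{\frac{n-q_{\bgm}-1}{2}}\{g(1-R_{\bgm}^2)+1\}^{-\frac{n-1}{2}}\,dg,
\end{equation*}
and invoke Theorem~\ref{thm:Laplace}, part~\ref{thm:Laplace:part1}, with $r=1-R_{\bgm}^2$. The catch is that the Laplace asymptotics there are stated for \emph{fixed} $r$, whereas here $R_{\bgm}^2$ is random and depends on $n$. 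The device is to work on the high-probability event supplied by Lemma~\ref{lem:prel}: by part~\ref{1:lem:prel}, $R_{\bgm}^2$ stays bounded away from $0$ and $1$ with probability $\ge 1-k$, so on that event $r=1-R_{\bgm}^2$ lies in a fixed compact subinterval of $(0,1)$, on which the Laplace approximation \eqref{eq:thmLaplace1} holds uniformly. This yields, on that event,
\begin{equation*}
\frac{\BF^G\gvf[\nu,k]}{\BF^G_{T:F}[\nu,k]}
\approx \left\{\frac{\varphi(q_{\bgm}-\nu,\,1-R_{\bgm}^2)}{\varphi(q_T-\nu,\,1-R_T^2)}\right\}^{1/2} n^{-(q_{\bgm}-q_T)/2}\left(\frac{1-R_T^2}{1-R_{\bgm}^2}\right)^{n/2}.
\end{equation*}
The ratio of $\varphi$-values is $O_p(1)$ (again because both arguments are bounded and bounded away from the degenerate values), so the right-hand side has exactly the same structure as the BIC ratio in \eqref{main-plim}, up to an $O_p(1)$ factor. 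One then repeats verbatim the case analysis from the proof of Theorem~\ref{thm:BICconsistency}: if $\bgm\supsetneq T$, part~\ref{2:lem:prel} of Lemma~\ref{lem:prel} bounds $\{(1-R_T^2)/(1-R_{\bgm}^2)\}^{n}$ in probability while $n^{-(q_{\bgm}-q_T)/2}\to0$; if $\bgm\nsupseteq T$, part~\ref{3:lem:prel} forces $(1-R_T^2)/(1-R_{\bgm}^2)$ strictly below $1$ in probability, so the $n$-th power decays exponentially and dominates any polynomial factor in $n$. Letting $k\to0$ removes the conditioning. Part (1) is identical but shorter, since $\widetilde{\BF}^G\gvf(\nu)$ is already in the factored form ``$\varphi$-ratio $\times \BF^G\gvf\BIC$'' with no integral to approximate, and the restriction $\nu<1$ is exactly what keeps $\varphi(q_{\bgm}-\nu,\cdot)$ finite and positive for the smallest possible $q_{\bgm}=1$; for part (3), set $k=0$ in part (2) and note $0<\nu<1$ is forced by $\nu>-k=0$ together with $\nu<\min_\bgm q_\bgm=1$.

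The main obstacle I anticipate is the uniformity issue just flagged: Theorem~\ref{thm:Laplace} is a pointwise-in-$r$ statement, and one must be careful that the error in the Laplace approximation is uniform over $r$ in a compact subinterval of $(0,1)$ and over the finitely many submodels $\bgm$, so that conditioning on the Lemma~\ref{lem:prel} events legitimately transfers the deterministic asymptotics to the random setting. This is routine (the integrand is smooth and the Hessian at the mode is bounded away from zero on the relevant compact set, as one sees from the explicit $\varphi$), but it is the step that requires actual care rather than bookkeeping; everything else is a transcription of the BIC argument with an extra $O_p(1)$ multiplicative factor carried along. A minor secondary point is checking that $q_{\bgm}-\nu>0$ throughout so that all the integrals and $\varphi$-values are well defined, which is guaranteed by $\nu<1\le q_{\bgm}$ for non-null $\bgm$.
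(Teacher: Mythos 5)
Your proposal is correct and follows essentially the same route as the paper: reduce everything to the BIC consistency of Theorem \ref{thm:BICconsistency} by showing the $\varphi$-ratio correction is bounded in probability via part \ref{1:lem:prel} of Lemma \ref{lem:prel}, use Theorem \ref{thm:Laplace} to pass from $\BF^G\gvf[\nu,k]$ to $\widetilde{\BF}^G\gvf(\nu)$, and obtain part (3) as the $k=0$ special case. Your explicit attention to the uniformity of the Laplace approximation over the random $r=1-R_\bgm^2$ is a point the paper's proof passes over silently, but it does not change the structure of the argument.
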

\begin{proof}
By part \ref{1:lem:prel} of Lemma \ref{lem:prel}, when 
$ \mcM_\bgm\neq\mcM_N$, both $R^2_\bgm$ and $R^2_F$ 
are positive and strictly less than 1 
with probability 1. Hence both $\varphi(q_\bgm-\nu,1-R^2_\bgm)$ and
$\varphi(p-\nu,1-R^2_F)$ 
where
\begin{equation*}
  \varphi(s,r)= rs^{s-1}\{(1/r-1)e\}^{-s}
\end{equation*}
are positive and bounded from above
with probability 1 provided $\nu<1$ and $\nu$ is independent of $n$ and $\mcM_\bgm$.
(On the other hand, since $R_N^2 \equiv 0$, $\varphi(q_\bgm-\nu,1-R^2_N)$
is not defined.)
As in Theorem \ref{thm:Laplace}, 
\begin{equation*}
 \widetilde{\BF}^G\gvf(\nu)  =
\left\{
\frac{\varphi(q_{\bgm}-\nu,1-R_{\bgm}^2)}{\varphi(p-\nu,1-R_F^2)}\right\}^{1/2}
\BF^G\gvf\BIC .
\end{equation*} 
Hence consistency of $\widetilde{\BF}^G\gvf(\nu)$ follows from
consistency of BIC.

Further as $n\to\infty$, we have
\begin{equation*}
 \BF^G\gvf[\nu,k] \approx \widetilde{\BF}^G\gvf(\nu)
\end{equation*}
by Theorem \ref{thm:Laplace}
provided $-k<\nu<1$ and $\{\nu,k\}$ are independent of $n$ and $\mcM_\bgm$.
Hence consistency of $\BF^G\gvf[\nu,k]$ follows from
consistency of $\widetilde{\BF}^G\gvf(\nu)$.

Remember that $\BF\gvf(\nu)$, the Bayes factor w.r.t.~sub-harmonic
priors, under \sss errors, is equal to $\BF^G\gvf[\nu,0]$
for $0<\nu<1$. 
Hence consistency of $\BF\gvf(\nu)$ follows from consistency of $\BF^G\gvf[\nu,k]$.
\end{proof}
\begin{remark}
\cite{Liang-etal-2008} established model selection consistency 
for $\nu<0$ and $k=2-\nu$ for Gaussian errors. Corollary \ref{cor:consistency}
in conjunction with Theorem \ref{thm:Laplace}
extends their result to the entire class of \sss distributions for 
a broader class of $\nu$ and $k$.

It should be emphasized in each of the above cases that is the Bayes
factor method developed for the Gaussian case that is shown to have
model selection consistency for the entire class of \sss error distributions.
These Gaussian based Bayes factors, however, are not Bayes factors
for error distributions which are not Gaussian, the sole exception 
being our robust Bayes factors which are based on separable priors 
in the sense described earlier, and which are simultaneously 
(for all \sss distributions) Bayes factors relative to the same prior.
\end{remark}

\begin{remark}
The issue of model selection consistency in our setup, is somewhat
complicated by the wide choice of possible error distributions.
If all errors are normally distributed, then under our assumptions A\ref{AS2} 
on the design matrix $\bm{X}_F$, 
imply that each $R_{\bgm}^2$ approaches a constant,
and that $\|\bep\|^2/n \to 1$.
If on the other hand, all models are variance mixtures of Gaussians
with mixture variance distributed as a positive random variable $\xi$,
then $\|\bep\|^2/n \to \xi$ a random variable, and $R^2_{\bgm}$
also approaches a random variable which is bounded above and below
in probability provided that $\xi$ is similarly bounded.

In general philosophical terms, it might be better to assume that
the sequence of error terms $\bep=(\epsilon_1,\dots,\epsilon_n)'$ 
are exchangeable for all $n$.
By De finetti's Theorem, this would imply that the error terms
all have a variance mixture of normal distributions.
We have chosen a slightly weaker requirement on the sequence of error
distributions, namely, that $\|\bep\|^2/n$ remains bounded above and
below in probability, which extracts the necessary limiting behavior
of the error terms to ensure consistency of model selection.
Interestingly, although we attain model selection consistency with 
these assumptions, it is not necessarily true that
$1=\mathrm{var}\epsilon_i=\mathrm{var}\xi$ is consistently estimated
by $\|\bep\|^2/n$.
\end{remark}

\section{Including the null model}\label{sec:null}
We noted in the introduction that it is often important that the null
model be allowed as a possible model. In this section we show how our
method my be easily altered to include the null model.

First note for example that expression \eqref{main-BF} clearly 
shows why the null model is not allowed as a possibility in the development of that section.
For the null model $\mcM_N$, $R^2_N=0$ so the numerator of \eqref{main-BF} 
is infinite, and hence so would be $ \BF^G_{N:F}(\nu)$.
This situation may be avoided at a slight cost in complexity and
in interpretability of the expressions.
The required alteration in the prior distributions (proper and improper)
is to treat the intercept parameter $\alpha$ as another $\bbe$,
(and not give it a ``uniform'' prior).
This results in replacing the improper prior in \eqref{improper-joint} by
\begin{equation*} 
 \pigm(\check{\bbe}_{\bgm},\eta_{\bgm}|\nu)  = 
\frac{\Gamma(\frac{q_{\bgm}+1-\nu}{2})}{2^{\frac{\nu}{2}}\pi^{\frac{q_{\bgm}+1}{2}}}
|\check{\bm{X}}'_{\bgm}\check{\bm{X}}_{\bgm}|^{\frac{1}{2}} 
(\check{\bbe}'_{\bgm}\check{\bm{X}}'_{\bgm}\check{\bm{X}}_{\bgm}\check{\bbe}_{\bgm})
^{-\frac{q_{\bgm}+1-\nu}{2}}\eta_{\bgm}^{\frac{\nu}{2}-1},
\end{equation*}
where $\check{\bbe}_{\bgm}=(\alpha,\bbe'_{\bgm})'$ 
and $\check{\bm{X}}_{\bgm}=(\bm{1}_n | \bm{X}_{\bgm})$.
Similarly the marginal distribution in \eqref{MGG}
and the Bayes factor given by \eqref{main-BF}
are replaced by
\begin{equation*}
\check{M}^G_{\bgm}(\bm{y}|\nu)
 = \frac{\Gamma(n/2)}{\|\bm{y}\|^{n}\pi^{n/2}}
\int_0^{\infty} \frac{g^{\nu/2-1}(1+g)^{(n-q_{\bgm}-1)/2}}
{\left\{g(1-\check{R}_{\bgm}^2)+1 \right\}^{n/2}}\, dg, 
\end{equation*}
and
\begin{equation} \label{eq:BFcheck}
\check{\BF}^G\gvf(\nu)
=
\frac{\int_0^{\infty} g^{\frac{\nu}{2}-1}(1+g)^{\frac{n-q_{\bgm}-1}{2}}
\{g(1-\check{R}_{\bgm}^2)+1 \}^{-\frac{n}{2}}
\, dg}{\int_0^{\infty} g^{\frac{\nu}{2}-1}(1+g)^{\frac{n-p-1}{2}}
\{g(1-\check{R}_F^2)+1 \}^{-\frac{n}{2}}
\, dg},
\end{equation}
where 
\begin{equation*}
 \check{R}_{\bgm}^2
=1-\frac{\|\bm{Q}_\bgm(\bm{y}-\bar{y}\bm{1}_n)\|^2}{\|\bm{y}\|^2},\quad
 \check{R}_{F}^2
=1-\frac{\|\bm{Q}_F(\bm{y}-\bar{y}\bm{1}_n)\|^2}{\|\bm{y}\|^2},
\end{equation*}
(the ``coefficient of determination'' of the model $\mcM_{\bgm}$ relative to
the $0$-intercept model).
Hence with the substitution $R^2_{\bgm} \to \check{R}^2_{\bgm}$, 
$n-1 \to n$, $q_{\bgm} \to q_{\bgm}+1$, $\bm{y}-\bar{y}\bm{1}_n \to \bm{y}$, 
all expressions and results in the paper remain valid.
In particular,
the Bayes factor $\check{\BF}^G\gvf(\nu)$ is approximated by
\begin{equation}\label{eq:checkwidetildeBF}
 \widetilde{\check{\BF}}^G\gvf(\nu)  =
\left\{
\frac{\varphi(q_{\bgm}+1-\nu,1-\check{R}_{\bgm}^2)}{\varphi(p+1-\nu,1-\check{R}_F^2)}\right\}^{1/2}\BF^G\gvf\BIC 
\end{equation} 
since
\begin{equation*}
 \frac{1-\check{R}_{\bgm}^2}{1-\check{R}_{F}^2}=
 \frac{1-R_{\bgm}^2}{1-R_{F}^2}.
\end{equation*}
Since $\varphi(q_{\bgm}+1-\nu,1-\check{R}_{\bgm}^2)$ under the null model
is strictly positive and bounded,
\eqref{eq:checkwidetildeBF} is asymptotically equivalent to BIC with
a simple $O(1)$ rational correction function
depending upon $\nu$ as well as the $ \check{R}^2_{\bgm}$ and the numbers of predictors.
The approximation \eqref{eq:checkwidetildeBF} guarantees that
the result (Corollary \ref{cor:consistency}) on model selection consistency 
in Section \ref{sec:consistency} holds also for the null model.

In summary, the quantity $ \check{R}^2_{\bgm}$ is somewhat unusual, but
if model selection consistency under the null-model is desirable,
we can use $ \check{\BF}^G\gvf(\nu) $ or the approximated 
$ \widetilde{\check{\BF}}^G\gvf(\nu) $.
Interestingly, under the Gaussian regression setup, 
\cite{Guo-Speckman-2009} and \cite{Celeux-Anbari-Marin-Robert-2012}
recommend use of the Bayes factor as a function of $ \check{R}^2_{\bgm}$,
which just substitutes $g^{\nu/2-1}$ with $(1+g)^{-1}$ in 
$ \check{\BF}^G\gvf(\nu) $ given by \eqref{eq:BFcheck}.

\section{Examples}\label{sec:ex}
In this section, 
we provide illustrations of the method using both simulated and real data.
In each example, we compare several different versions of the Laplace
approximated Bayes factors $\widetilde{\BF}^G\gvf(\nu)$ and $\BF^G\gvf\BIC$.
The values of $\nu$ are $-2,-1,0,0.5,0.95$. These choices correspond to our default
choice, $\nu=1/2$ and $\nu=0.95$ which also satisfies our robustness condition $0<\nu<1$.
The choice $\nu=0$ approximates $\BF^G\gvf[0,2]$ of \cite{Guo-Speckman-2009} and
$\nu=-2$ and $\nu=-1$ approximates two choices of \cite{Liang-etal-2008}
as presented in Corollary \ref{cor:approxBF}.
\subsection{Simulation Studies}\label{sec:sim}
We compare numerical performance of our 
with BIC in a small simulation study.
We generated 16 possible correlated predictors ($p=16$) as follows:
\begin{equation*}
\begin{split}
& \overbrace{x_1, x_2}^{\mathrm{cor}=0.5},
\underbrace{x_3, x_4}_{\mathrm{cor}=-0.4},
 \overbrace{x_5, x_6}^{\mathrm{cor}=0.3},
 \underbrace{x_7, x_8}_{\mathrm{cor}=-0.2},
\overbrace{x_9, x_{10}}^{\mathrm{cor}=0.1}
\sim N(0,1) \\
& x_{11}, x_{12}, x_{13}, x_{14},
x_{15}, x_{16}  \sim N(0,1). 
\end{split}
\end{equation*}
Here ``cor'' denotes the correlation of two Gaussian random variables.
Also
$(x_1,x_2)$, $(x_3,x_4)$, $(x_5,x_6)$, $(x_7,x_8)$, $(x_9,x_{10})$,
$x_{11},x_{12}, x_{13}, x_{14}, x_{15},x_{16}$ are assumed to be independent.
After generating pseudo random $x_1,\dots,x_{16}$, we centered and scaled them
as noted in Section \ref{sec:intro}.
We set $n=30$ and consider 4 cases where the true predictors are
\begin{description}
 \item[$q_{T}=16$] \
$x_1, x_2, x_3, x_4, x_5, x_6, x_7, x_8, x_9, x_{10},
x_{11}, x_{12}, x_{13}, x_{14}, x_{15}, x_{16}$ 
 \item[$q_{T}=12$] \
$x_1, x_2, x_3, x_4, x_5, x_6, x_7, x_8,  x_{9}, x_{10}, x_{11}, x_{12}$ 
 \item[$q_{T}=8$] \ \, 
$x_1, x_2, \qquad \quad x_5, x_6, \qquad \quad x_9, x_{10}, x_{11} , x_{12}$ 
 \item[$q_{T}=4$] \  \,
 $x_1, x_2, \qquad \quad x_5, x_6$ 
\end{description}
(where $q_T$ denotes the number of true predictors) and the true model
is given by
\begin{equation} \label{simu-true-model}
\bm{y}=\bm{1}_{30} + 2 \sum_{i \in \{\mbox{true}\}}\bm{x}_i+ 
\sigma \times
\begin{cases}
 N_{30}(\bm{0},\bm{I}_{30}), \\ \mbox{Multi-}t(\bm{0},\bm{I}_{30};3,30),
\end{cases}
\end{equation}
with $\sigma=0.5,\ 1, \ 2$.
Tables \ref{normal} and \ref{multi-t} show 
how often the true model ranks first and how often 
it is in the top 3 among $2^{16}-1$ candidates
when the number of replicates is $N=200$. 
The error distributions are Gaussian (Table \ref{normal}) and multivariate-$t$
with 3 degrees of freedom (Table \ref{multi-t}).
For the case of normally distributed errors (Table \ref{normal}),
the Bayes factor methods performed well and stably for $\sigma=0.5$ and
$\sigma=1$ and did reasonably well for $\sigma=2$ for the smaller
true models ($q_T=4,8$). BIC seemed, generally, to have a preference
for larger models, and performed much less well than the Bayes factor method
for $\sigma=0.5$ and $\sigma=1$ for models of smaller size ($q_T=4,8,12$)
For $\sigma=2$, BIC did substantially better than BF for the largest
model ($q_T=16$) and somewhat better for $q_T=12$.
Performance of $\widetilde{\BF}^G\gvf(\nu)$ seemed relatively insensitive
to the choice of $\nu$. 
When $q_T \neq 16$, the choice of $\nu$ makes little difference. 
But when $q_T = 16$, positive $\nu=(0.5,0.95)$ seems
to perform better especially for larger $\sigma$.

Interestingly, for the case of a multivariate-$t$ error distribution with $3$
degrees of freedom (the minimum so that a variance exists),
the numerical results were quite similar to those in the normal case
for both $\widetilde{\BF}^G\gvf(\nu)$ and BIC, both quantitatively and qualitatively.
One possible aspect of the relative insensitivity of the results to 
choice of $\nu$ in heavy tailed case is the extension of model selection
consistency for the entire class of \sss errors to a broad class 
mixture of $g$-prior based methods given by Corollary \ref{cor:consistency}.

\begin{table} 
\setlength{\tabcolsep}{4pt}
\caption{Frequency of the true model (Gaussian error)}
\begin{center}
 \begin{tabular}{c|cc|cc|cc|cc} \toprule
$q_T$ & \multicolumn{2}{c}{16} & \multicolumn{2}{c}{12}
& \multicolumn{2}{c}{8} & \multicolumn{2}{c}{4} \\ \midrule 
rank & 1 & 1-3 & 1 & 1-3 &  1 & 1-3 &  1 & 1-3  \\
   \midrule \midrule
\multicolumn{3}{c}{$\sigma=0.5$} & \multicolumn{6}{c}{} \\ 
\cmidrule(r){1-4} 
$\widetilde{\BF}^G(0.95)$ & 1.00 & 1.00 & 0.94 &  1.00 & 0.94 & 0.99 & 0.89 & 0.99  \\
$\widetilde{\BF}^G(0.5)$ & 1.00 & 1.00 & 0.95 & 1.00 & 0.94 & 0.99 & 0.88 & 0.99 \\
$\widetilde{\BF}^G(0)$  & 1.00 & 1.00 & 0.95 & 1.00 & 0.94 & 0.99 & 0.88 & 0.99   \\
$\widetilde{\BF}^G(-1)$ & 1.00 & 1.00 & 0.96 & 1.00 & 0.94 & 1.00 & 0.89 & 0.99  \\ 
$\widetilde{\BF}^G(-2)$ & 1.00 & 1.00 & 0.96 & 1.00 & 0.94 & 1.00 & 0.89 & 0.99  \\ 
\cmidrule(r){1-1} 
BIC & 1.00 & 1.00 & 0.44 & 0.62 & 0.28 & 0.46 & 0.20 & 0.35  \\
\midrule \midrule
\multicolumn{3}{c}{$\sigma=1$} & \multicolumn{6}{c}{} \\ 
\cmidrule(r){1-4} 
$\widetilde{\BF}^G(0.95)$  & 0.85 & 0.92 & 0.87 & 0.99 & 0.86 & 0.97 & 0.76 & 0.95  \\
$\widetilde{\BF}^G(0.5)$  & 0.83 & 0.90 & 0.88 & 0.99 & 0.87 & 0.97 & 0.76 & 0.94 \\
$\widetilde{\BF}^G(0)$  & 0.80 & 0.87 & 0.89 & 1.00 & 0.87 & 0.97 & 0.75 & 0.94  \\
$\widetilde{\BF}^G(-1)$  & 0.73 & 0.81 & 0.89 & 1.00 & 0.88 & 0.97 & 0.74 & 0.94  \\
$\widetilde{\BF}^G(-2)$  & 0.55 & 0.72 & 0.90 & 1.00 & 0.89 & 0.98 & 0.76 & 0.95  \\
\cmidrule(r){1-1} 
BIC & 1.00 & 1.00 & 0.44 & 0.62 & 0.28 & 0.46 & 0.20 & 0.35  \\
\midrule \midrule
\multicolumn{3}{c}{$\sigma=2$} & \multicolumn{6}{c}{} \\ 
\cmidrule(r){1-4} 
$\widetilde{\BF}^G(0.95)$ & 0.06 & 0.11 & 0.26 & 0.42 & 0.50 & 0.74 & 0.51 & 0.73 \\
$\widetilde{\BF}^G(0.5)$ & 0.05 & 0.10 & 0.25 & 0.41 & 0.51 & 0.74 & 0.50 & 0.72 \\
$\widetilde{\BF}^G(0)$ & 0.05 & 0.10 & 0.24 & 0.41 & 0.52 & 0.73 & 0.49 & 0.72 \\
$\widetilde{\BF}^G(-1)$ & 0.04 & 0.06 & 0.22 & 0.39 & 0.43 & 0.74 & 0.48 & 0.72 \\
 $\widetilde{\BF}^G(-2)$ & 0.02 & 0.03 & 0.17 & 0.32 & 0.47 & 0.72 & 0.43 & 0.71 \\
\cmidrule(r){1-1} 
BIC & 0.62 & 0.77 & 0.31 & 0.48 & 0.24 & 0.40 & 0.19 & 0.34 \\
\bottomrule
 \end{tabular}
\end{center}
\label{normal}
\end{table}%

\begin{table} 
\setlength{\tabcolsep}{4pt}
\caption{Frequency of the true model (multi-$t$ error)}
\begin{center}
 \begin{tabular}{c|cc|cc|cc|cc} \toprule
$q_T$ & \multicolumn{2}{c}{16} & \multicolumn{2}{c}{12}
& \multicolumn{2}{c}{8} & \multicolumn{2}{c}{4} \\ \midrule 
rank & 1 & 1-3 & 1 & 1-3 &  1 & 1-3 &  1 & 1-3  \\
   \midrule \midrule
\multicolumn{3}{c}{$\sigma=0.5$} & \multicolumn{6}{c}{} \\ 
\cmidrule(r){1-4} 
$\widetilde{\BF}^G(0.95)$ & 0.94 & 0.95 & 0.92 & 0.98 & 0.85 & 0.97 & 0.84 & 0.96 \\
$\widetilde{\BF}^G(0.5)$ & 0.94 & 0.95 & 0.92 & 0.98 & 0.86 & 0.97 & 0.84 & 0.96 \\
$\widetilde{\BF}^G(0)$  & 0.93 & 0.95 & 0.93 & 0.98 & 0.86 & 0.97 & 0.84 & 0.96 \\
$\widetilde{\BF}^G(-1)$ & 0.93 & 0.94 & 0.93 & 0.98 & 0.87 & 0.97 & 0.83 & 0.96 \\
$\widetilde{\BF}^G(-2)$ & 0.91 & 0.92 & 0.95 & 0.98 & 0.89 & 0.98 & 0.84 & 0.96 \\
\cmidrule(r){1-1} 
BIC & 0.98 & 0.99 & 0.47 & 0.63 & 0.28 & 0.45 & 0.26 & 0.38 \\
\midrule \midrule
\multicolumn{3}{c}{$\sigma=1$} & \multicolumn{6}{c}{} \\ 
\cmidrule(r){1-4} 
$\widetilde{\BF}^G(0.95)$  & 0.67 & 0.70 & 0.75 & 0.84 & 0.72 & 0.88 & 0.70 & 0.85 \\
$\widetilde{\BF}^G(0.5)$  & 0.64 & 0.68 & 0.76 & 0.84 & 0.71 & 0.88 & 0.70 & 0.85 \\
$\widetilde{\BF}^G(0)$  & 0.62 & 0.67 & 0.77 & 0.84 & 0.72 & 0.89 & 0.69 & 0.84 \\
$\widetilde{\BF}^G(-1)$  & 0.58 & 0.63 & 0.76 & 0.83 & 0.75 & 0.89 & 0.69 & 0.84 \\
$\widetilde{\BF}^G(-2)$  & 0.49 & 0.58 & 0.76 & 0.83 & 0.75 & 0.89 & 0.69 & 0.84 \\
\cmidrule(r){1-1} 
BIC & 0.89 & 0.93 & 0.45 & 0.60 & 0.27 & 0.44 & 0.25 & 0.37 \\
\midrule \midrule
\multicolumn{3}{c}{$\sigma=2$} & \multicolumn{6}{c}{} \\ 
\cmidrule(r){1-4} 
$\widetilde{\BF}^G(0.95)$ & 0.14 & 0.20 & 0.28 & 0.37 & 0.37 & 0.51 & 0.45 & 0.58 \\
$\widetilde{\BF}^G(0.5)$ & 0.14 & 0.18 & 0.29 & 0.36 & 0.37 & 0.51 & 0.45 & 0.57 \\
$\widetilde{\BF}^G(0)$ & 0.13 & 0.17 & 0.29 & 0.36 & 0.38 & 0.51 & 0.45 & 0.57 \\
$\widetilde{\BF}^G(-1)$ & 0.09 & 0.13 & 0.29 & 0.33 & 0.38 & 0.51 & 0.44 & 0.56 \\
$\widetilde{\BF}^G(-2)$ & 0.07 & 0.11 & 0.24 & 0.34 & 0.35 & 0.49 & 0.40 & 0.55 \\
\cmidrule(r){1-1} 
BIC & 0.47 & 0.59 &0.28 & 0.39 & 0.21 & 0.32 & 0.20 & 0.30 \\
\bottomrule
\end{tabular}
\end{center}
\label{multi-t}
\end{table}%

\subsection{Analysis of real data} \label{sec:real}
In this section, we apply our methods (approximate Bayes factor and BIC)
to Hald data set presented and analyzed in \cite{Casella-Moreno-2006}
and to the US Crime data set in \cite{Raftery-Madigan-Hoeting-1997}.
See those papers for detailed descriptions of the data sets.
Table \ref{table:hald} and \ref{table:uscrime} present posterior probabilities 
based on $\widetilde{\BF}^G(\nu)$ of the top three selected models 
(assuming equal prior probabilities on all models) for several different choices
of $\nu$ ($0.95,0.5,0,-1,-2$). 
BIC was also included in the study. In each case,
the first, second and third ranked choices based on $\widetilde{\BF}^G(\nu)$
were identical regardless of the choices of $\nu$.
Also in each case the top ranked submodel based on $\widetilde{\BF}^G(\nu)$
was regarded as reasonable in the earlier papers. As in the simulation study,
and as noted in several previous studies, BIC seems to choose bigger models.
In particular for the Hald data, the top choice $\{x_1,x_2\}$ agrees
with that of \cite{Casella-Moreno-2006} and also of \cite{Berger-Pericchi-1996}
and \cite{Draper-Smith-1998}.

For the US Crime data, our top ranked model agrees with that of the Occam's
window posterior in Table 2 of \cite{Raftery-Madigan-Hoeting-1997}.
Interestingly our second ranked model includes $x_{15}$ which does not
occur in any of \citeapos{Raftery-Madigan-Hoeting-1997} Occam's window model choices,
but which does occur in several models chosen by such classical methods
as Mallow's $C_p$, adjusted $R^2$, etc.~in their Table 1.

\begin{table}
\setlength{\tabcolsep}{4pt}
\caption{Hald data: posterior probabilities of top 3 selected models}
\begin{center}
\begin{tabular}{cccccccc} \toprule
& & & \multicolumn{5}{c}{$\widetilde{\BF}^G$} \\
&  & $\nu$ & 0.95 & 0.5 & 0 & -1 & -2 \\
1 & $\{1,2, \, \ \quad\}$ & & 0.66 & 0.63 & 0.61 & 0.57 & 0.54 \\
2 & $\{1,\qquad 4\}$ & & 0.16 & 0.17 & 0.17 & 0.18 & 0.20 \\
3 & $\{1,2,\quad 4\}$ & & 0.06 & 0.07 & 0.07 & 0.08 & 0.08 \\
\midrule
& & & BIC & \multicolumn{4}{c}{} \\
1 & $\{1,2, \, \ \quad\}$ & & 0.25 & \multicolumn{4}{c}{} \\
2 & $\{1,2,\quad 4\}$ & & 0.23 & \multicolumn{4}{c}{} \\
3 & $\{1,2,3, \ \ \}$ & & 0.23 & \multicolumn{4}{c}{} \\
\bottomrule
\end{tabular}
 \end{center}
\label{table:hald}
\end{table}

\begin{table}
\setlength{\tabcolsep}{4pt}
\caption{US crime data: posterior probabilities of top 3 selected models}
\begin{center}
\begin{tabular}{cccccccc} \toprule
& & & \multicolumn{5}{c}{$\widetilde{\BF}^G$} \\
& & $\nu$ & 0.95 & 0.5 & 0 & -1 & -2  \\
1 & $\{1,3,4, \quad 9,11, \quad \ 13,14 \quad \ \}$   & & 0.020 & 0.019 & 0.018 & 0.016  & 0.015 \\
2 & $\{1,3,4,\quad 9,11, \quad \ 13,14,15\}$ & & 0.018 & 0.018 & 0.017 & 0.015 & 0.014 \\
3 & $\{1,3,\quad 5,9,11, \quad \ 13,14 \quad \ \}$    & & 0.013 & 0.013 & 0.012 & 0.011 & 0.010 \\
\midrule
& & & BIC & \multicolumn{4}{c}{} \\
1 & $\{1,3,4,\quad 9,11, \quad \ 13,14,15\}$ & & 0.035 & \multicolumn{4}{c}{} \\
2 & $\{1,3,4, \quad 9,11, \quad \ 13,14 \quad \ \}$ & & 0.026 & \multicolumn{4}{c}{} \\
3 & $\{1,3,4,\quad 9,11, 12, 13,14,15\}$ & & 0.019 & \multicolumn{4}{c}{} \\
\bottomrule
\end{tabular}
\end{center}
\label{table:uscrime}
\end{table}

\section{Concluding remarks}\label{sec:cr}
Bayesian model selection for linear regression models with Gaussian errors
has been a popular area of study for some time. There is also a substantial
literature devoted to studying the extension of Stein-type shrinkage estimators
from models with Gaussian errors to those with general \sss
errors. In particular, it has long been observed that certain shrinkage
estimators which improve over the least squares (LS) estimator for
Gaussian models also improve over the LS estimator simultaneously for all
\sss error models 
(See for example, \cite{Cellier-et-al-1989}).
\cite{Maruyama-2003b} and \cite{Maru-Straw-2005} found, in addition, that
certain separable priors (in the sense described in Section \ref{sec:prior})
lead to generalized Bayes shrinkage estimators that do not depend on
the form of the underlying \sss distribution and that
also simultaneously improve on the LS estimator, sometimes dramatically so.
The original aim of this research was to see if similar separable priors
could be found that have this distributional robustness property in the
variable selection problem (and, that also perform well with regard to
model selection consistency and with regard to good MSE performance of the
estimators of $\bbe_\bgm$ in each of submodels).

The generalized Bayes priors developed in sections \ref{sec:prior} and
\ref{sec:marginal} turned out to satisfy our requirements and also to be closely 
related to other so called $g$-priors (or mixtures of $g$-priors) in the
literature (See e.g.~\cite{Liang-etal-2008,Guo-Speckman-2009}).
We demonstrated that our sub-harmonic priors are the only ones in this class
that are robust in our sense.

The expression of our Bayes factors, e.g.~\eqref{eq:main-thm-BF},
are relatively simple involving the ratio of two $1$-dimensional integrals.
To further simplify calculations we investigated Laplace approximations
to our Bayes factors, and more generally, to a collection of Bayes factors arising from
mixtures of $g$-priors that have recently appeared 
(See \cite{Liang-etal-2008, Guo-Speckman-2009}). 
We show in Section \ref{sec:Laplace} that in each case the Bayes factor can be approximated
as the Bayes factor for the Gaussian model based on BIC times a simple
rational function depending $q$, $\nu$ and the $R^2$ of the models.

Using these Laplace approximations we are able to establish model selection
consistency of our robust  procedure for the entire class of \sss distributions
and to extend the model consistency results of several earlier papers
for the Gaussian case to the entire class of \sss distributions.

A small simulation study and an analysis of the Hald data 
(See \cite{Casella-Moreno-2006}) 
and the US Crime data (See \cite{Raftery-Madigan-Hoeting-1997}) indicates that our method 
performs well. It gives results consistent with the results of the cited 
papers for the real data sets and performs comparably
and sometimes better than several of the mixture of $g$-prior methods.

\appendix
\section{Proof of Lemma \ref{lem:relationship-G}}
\label{app:relationship-G}
Under the submodel $\mcM_{\bgm}$, 
the conditional marginal density of $\bm{y}$ with respect to improper prior
$\eta_{\bgm}^{\nu/2-1}$ given $\alpha$ and $\bbe_{\bgm}$ is
\begin{equation}
\begin{split}
& M_{\bgm}(\bm{y}|\{\alpha,\bbe_{\bgm}\},\nu)=\int_0^\infty \eta_{\bgm}^{n/2} 
f_n\left(\eta_{\bgm}\|\bm{y}-\alpha \bm{1}_n -\bm{X}_{\bgm}\bbe_{\bgm}\|^2\right)
\eta_{\bgm}^{\nu/2-1} d\eta_{\bgm} \\
& = \|\bm{y}-\alpha \bm{1}_n -\bm{X}_{\bgm}\bbe_{\bgm}\|^{-n-\nu}
\int_0^\infty t^{\{n+\nu\}/2-1}f_n(t)dt  \\
& 
= \frac{\int_0^\infty t^{(n+\nu)/2-1}f_n(t)dt}
{\int_0^\infty t^{(n+\nu)/2-1}f_G(t)dt}
\int_0^\infty 
f_G\left(\eta\|\bm{y}-\alpha \bm{1}_n -\bm{X}_{\bgm}\bbe_{\bgm}\|^2\right)
\eta^{\{n+\nu\}/2-1} d\eta \\
&= \frac{E[\|\bep_{\bgm}\|^\nu]}{E[\|\bep_G\|^\nu]}
\int_0^\infty 
\eta_{\bgm}^{n/2}f_G\left(\eta_{\bgm}\|\bm{y}-\alpha \bm{1}_n -\bm{X}_{\bgm}\bbe_{\bgm}\|^2\right)
\eta_{\bgm}^{\nu/2-1} d\eta_{\bgm} \\
&=\frac{E[\|\bep_{\bgm}\|^\nu]}{E[\|\bep_G\|^\nu]}
M^G_{\bgm}(\bm{y}|\{\alpha,\bbe_{\bgm}\},\nu)
\end{split}
\end{equation}
where $f_G(t)=(2\pi)^{-n/2}\exp(-t/2)$,
provided 
\begin{equation} \label{moment-cond}
\int_0^\infty t^{(n+\nu)/2-1}f_n(t)dt<\infty \ \Leftrightarrow \ 
E[\|\bep_{\bgm}\|^\nu]<\infty.
\end{equation}
Therefore, we have 
\begin{equation}\label{eq:MG}
\begin{split}
& M_{\bgm}(\bm{y}|\nu) =
\iint M_{\bgm}(\bm{y}|\{\alpha,\bbe_{\bgm}\},\nu)
\pi(\alpha,\bbe_{\bgm})d\alpha d\bbe_{\bgm} \\
&\quad =\iint M^G_{\bgm}(\bm{y}|\{\alpha,\bbe_{\bgm}\},\nu)
\pi(\alpha,\bbe_{\bgm})d\alpha d\bbe_{\bgm}  
=\frac{E[\|\bep_{\bgm}\|^\nu]} {E[\|\bep_G\|^\nu]}
M^G_{\bgm}(\bm{y}|\nu). 
\end{split}
\end{equation}

\section{Proof of Lemma \ref{lem:uniqueness}}
\label{app:uniqueness}
Let $\eta=z_1$ and $\bbe'\bm{X}'\bm{X}\bbe=z_2$.
Note that the integral in \eqref{eq:mixturesg} is
\begin{equation}
\begin{split}
 h(z_1,z_2)&= \int_0^\infty \frac{g^{\nu/2-1}\tilde{\pi}(g)}{g^{q_{\bgm}}/2}
\exp\left(-\frac{z_1z_2}{2g}\right)dg \\
&=\int_0^\infty t^{(q_{\bgm}-\nu)/2-1}\tilde{\pi}(1/t)\exp\left(-t\frac{z_1z_2}{2}\right)dt
\end{split}
\end{equation}
and hence is a function of $z_1z_2$. 
Additionally, separability of $h(z_1,z_2)$ implies (essentially by Cauchy's functional equation) that 
\begin{equation}
 h(z_1,z_2)=c_1(z_1z_2)^{-c_2}
\end{equation}
for some constant $c_1$ and $c_2$ and any $z_1z_2>0$.

When $z_1z_2\to \infty$, we have
\begin{equation}
\lim_{z_1z_2\to\infty}\frac{(z_1z_2)^{(q_{\bgm}-\nu)/2}h(z_1,z_2)}
{\Gamma(\{q_{\bgm}-\nu\}/2)2^{(q_{\bgm}-\nu)/2}} =1
\end{equation}
from the same Tauberian theorem used in Remark \ref{rem:objectivity}, which implies that
$c_1$ and $c_2$ should be
\begin{equation}
 c_1=\Gamma(\{q_{\bgm}-\nu\}/2)2^{(q_{\bgm}-\nu)/2}, c_2=(q_{\bgm}-\nu)/2.
\end{equation}
The equation
\begin{equation}
\frac{(z_1z_2)^{(q_{\bgm}-\nu)/2}h(z_1,z_2)}
{\Gamma(\{q_{\bgm}-\nu\}/2)2^{(q_{\bgm}-\nu)/2}} =1
\end{equation}
for any $z_1z_2>0$ is equivalent to
\begin{equation}\label{eq:app_comp}
 \int_0^\infty \left(\tilde{\pi}(1/t)-1\right)
f(t;(q_{\bgm}-\nu)/2,z_1z_2)dt=0
\end{equation}
where $ f(t;(q_{\bgm}-\nu)/2,z_1z_2)$ is the probability density of
gamma distribution with shape parameter $(q_{\bgm}-\nu)/2$ 
and scale parameter $z_1z_2/2$.
By completeness of the Gamma distribution,
\eqref{eq:app_comp} holds if and only if 
\begin{equation}
 \tilde{\pi}(1/t)=1,
\end{equation}
which completes the proof.

\section{Proof of Theorem \ref{thm:Laplace}}
\label{app:Laplace}
Denote the left-hand side of \eqref{eq:thmLaplace1} by $H(n)$.
When approximating $H(n)$,
make the change of variables $\tau= \log g$.
See  \cite{Liang-etal-2008} for details.
With this transformation, the integral becomes
\begin{equation}\label{check-M}
H(n)
=\int_{-\infty}^\infty
\frac{e^{(\nu/2-1)\tau}(1+e^\tau)^{(n-q-1)/2}}
{(1+e^{-\tau})^{k/2}\left\{1+re^\tau \right\}^{(n-1)/2}} e^\tau \, d\tau ,
\end{equation}
where the extra $e^\tau$ comes from the Jacobian of the transformation.
Denote the logarithm of the integrand function in 
\eqref{check-M} by $h(\tau,n)$.
We have
\begin{equation}\label{deriv-h-of-tau}
 \frac{\partial}{\partial \tau}h(\tau,n) 
= \frac{1}{2}\frac{z}{1+z}
\left\{\frac{(n-1)(1-r)}{1+rz} +\frac{\nu+k}{z}-(q-\nu)\right\}, 
\end{equation}
where $z=e^\tau$.
Since $0<r<1$ and $\nu+k>0$,
the equation $ \{\partial/\partial \tau\}h(\tau,n)=0$ has 
the only one positive root $ \hat{z}=e^{\hat{\tau}} $.
It clearly satisfies
\begin{equation}
\lim_{n \to \infty} \frac{\hat{z}}{n} = \frac{1/r-1}{q-\nu}.
\end{equation}
Hence we have
\begin{equation}
\begin{split}
 e^{h(\hat{\tau},n)} 
&= \left\{\hat{z}^{\nu} 
(1+\hat{z})^{n-q-1}(1+\hat{z}^{-1})^{-k}(1+r\hat{z})^{-n+1}\right\}^{1/2} \\
& = \left\{ 
\frac{\hat{z}^{-q+\nu}}{r^{n-1}}
\left( 1+\frac{n/\hat{z}}{n}\right)^{n-q-1-k}
\left(1+\frac{n/(r\hat{z})}{n}\right)^{-n+1} \right\}^{1/2}
\\
& \approx 
\left\{\left(\frac{q-\nu}{n\{1/r-1\}}\right)^{q-\nu}r^{-n+1}
\exp\left(\{1-1/r\}n/\hat{z}\right)
\right\}^{1/2} \\
& = \left\{\left(\frac{q-\nu}{n\{1/r-1\}e}\right)^{q-\nu}r^{-n+1}
\right\}^{1/2}. 
\end{split}
\end{equation}
Similarly, as in \eqref{deriv-h-of-tau},
we have 
\begin{equation*}
\frac{\partial^2}{\partial \tau^2}h(\tau,n)
 =\frac{(\partial/\partial \tau)h(\tau,n)}{1+z}-\frac{z^2}{2(1+z)}
\left\{\frac{(n-1)(1-r)r}{(1+rz)^2} +\frac{\nu+k}{z^2}\right\}
\end{equation*}
and
\begin{equation}
\frac{\partial^2}{\partial \tau^2}h(\tau,n)|_{\tau=\hat{\tau}}
 \approx 
-\frac{\hat{z}}{2(1+\hat{z})}
\frac{(n-1)(1-r)r\hat{z}}{(1+r\hat{z})^2} \\
 \approx -\frac{n}{2\hat{z}}\frac{1-r}{r}
\approx -\frac{q-\nu}{2}.
\end{equation}
Therefore we have
\begin{equation} \label{approx}
\begin{split}
H(n) & \approx (2\pi)^{1/2} e^{h(\hat{\tau},n)}
\left(\{-\partial^2/\partial \tau^2\}h(\tau,n)|_{\tau=\hat{\tau}}\right)^{-1/2} \\
& \approx \left\{\frac{4\pi}{q-\nu}
\left(\frac{q-\nu}{n\{1/r-1\}e}\right)^{q-\nu}
r^{-n+1}\right\}^{1/2}
\end{split}
\end{equation}
as $n \to \infty$. Hence part \ref{thm:Laplace:part1}
of the theorem follows.

Since $\BF^G\gvf[\nu,k]$ in \eqref{BF-1} is given by the ratio of
such integrals, part  \ref{thm:Laplace:part2} of the theorem follows.
\section{Proof of Lemma \ref{lem:prel}}
\label{app:prel}
Let $\mcM_T$ be the true submodel 
$\bm{y}=\alpha_T \bm{1}_n+ \bm{X}_T\bbe_T+\sigma_T\bep$
where 
$ \bm{X}_T$ is the $n \times q_T$ true design matrix and
$\bbe_T$ is the true ($q_T\times 1$) coefficient vector.

For the submodel $\mcM_{\bgm}$,
$1-R_{\bgm}^2$ is given by 
$ \|\bm{Q}_{\bgm}(\bm{y}-\bar{y}\bm{1}_n)\|^2/\|\bm{y}-\bar{y}\bm{1}_n\|^2$
with 
$\bm{Q}_{\bgm}
=\bm{I}-\bm{X}_{\bgm}(\bm{X}'_{\bgm} \bm{X}_{\bgm})^{-1}\bm{X}'_{\bgm}$.
The numerator and denominator are rewritten as
\begin{equation}\label{numerator-R2}
 \begin{split}
&  \|\bm{Q}_{\bgm}(\bm{y}-\bar{y}\bm{1}_n)\|^2 =
\|\bm{Q}_{\bgm} \bm{X}_T\bbe_T+\sigma_T\bm{Q}_{\bgm}\check{\bep}\|^2 \\
& \quad = \bbe'_T\bm{X}'_T\bm{Q}_{\bgm} \bm{X}_T\bbe_T 
+ 2\sigma_T\bbe'_T\bm{X}'_T\bm{Q}_{\bgm}\bep+
\sigma_T^2\check{\bep}'\bm{Q}_{\bgm} \check{\bep}
 \end{split}
\end{equation}
where $\check{\bep}=\bep-\bar{\epsilon}\bm{1}_n$ and similarly 
\begin{equation*}
\|\bm{y}-\bar{y}\bm{1}_n\|^2= \bbe'_T\bm{X}'_T \bm{X}_T\bbe_T 
+ 2\sigma_T\bbe'_T\bm{X}'_T\bep+
\sigma_T^2\|\check{\bep}\|^2.
\end{equation*}
Since $ \check{\bep}'\bm{Q}_{\bgm} \check{\bep} 
\leq \|\check{\bep}\|^2$, 
$1-R^2_{\bgm}$ is bounded as
\begin{equation} \label{1-R2}
\begin{split}
& \frac{\bbe'_T\{\bm{X}'_T\bm{Q}_{\bgm} \bm{X}_T/n\}\bbe_T 
+ 2\sigma_T\bbe'_T\{\bm{X}'_T\bm{Q}_{\bgm}\bep/n\}
+\sigma_T^2W_{\bgm} V_{n}} 
{\bbe'_T\{\bm{X}'_T \bm{X}_T/n\}\bbe_T 
+ 2\sigma_T\bbe'_T\{\bm{X}'_T\bep/n\}+
\sigma_T^2W_{\bgm} V_{n}} \\
& \quad \leq 1-R^2_{\bgm} \leq 
\frac{\bbe'_T\{\bm{X}'_T\bm{Q}_{\bgm} \bm{X}_T/n\}\bbe_T 
+ 2\sigma_T\bbe'_T\{\bm{X}'_T\bm{Q}_{\bgm}\bep/n\}+ \sigma_T^2V_{n}} 
{\bbe'_T\{\bm{X}'_T \bm{X}_T/n\}\bbe_T 
+ 2\bbe'_T\{\bm{X}'_T\bep/n\}+ \sigma^2_T V_{n}}
 \end{split}
\end{equation}
where $V_n=\check{\bep}'\check{\bep}/n$ 
and 
$W_{\bgm} = \check{\bep}'\bm{Q}_{\bgm} \check{\bep}/\|\check{\bep}\|^2 $.
As shown in \cite{Kelker-1970}, 
$W_{\bgm} \sim Be(\{n-q_{\bgm}-1\}/2, q_{\bgm}/2)$ even if
$\bep$ has a general (not necessarily Gaussian) \sss distribution.
In \eqref{1-R2}, we have the following.
\begin{itemize}
\item Since $E[\bep]=\bm{0}$ and 
$\mathrm{Var}[\bep]=\bm{I}_n$,
$E[\bm{X}'_T\bep/n]=\bm{0}$ and
\begin{equation} \label{var-in-probability}
\begin{split}
\mathrm{var}\left(\bm{X}'_T\bep/n\right)=
n^{-1}\{\bm{X}'_T\bm{X}_T/n\}  \to \bm{0}.
\end{split}
\end{equation}
Therefore $ \bbe'_T\bm{X}'_T\bep/n $ approaches 
$0$ in probability.
\item 
When $\bgm \supseteq T$, $\bm{Q}_{\bgm} \bm{X}_T$ is a zero matrix.
When $\bgm \nsupseteq T$, 
$\bbe'_T\{\bm{X}'_T\bm{Q}_{\bgm}\bep/n\} \to 0 $ in probability
can be proved as \eqref{var-in-probability}.
\item By the assumption A\ref{AS2}, 
$ \bm{X}'_T\bm{X}_T/n - \bm{X}'_T\bm{Q}_{\bgm} \bm{X}_T/n$ is 
positive-definite for any $n$
and hence
\begin{equation*}
\bbe'_T\{\bm{X}'_T \bm{X}_T/n\}\bbe_T> 
\bbe'_T\{\bm{X}'_T \bm{Q}_{\bgm} \bm{X}_T/n\}\bbe_T, 
\mbox{ for }\bbe_T\neq\bm{0}.
\end{equation*}
\item $ W_{\bgm}$ converges to $1$ in probability. 
\item By the assumption A\ref{AS1} on $\bep'\bep/n$, 
$ V_n$ is also
bounded in probability from below and from above.
\end{itemize}
Combining these facts, we see $0<R_{\bgm}^2<1 $ with strict inequalities in probability.

\medskip

Since $\bm{Q}_{\bgm} \bm{X}_T=\bm{0}$ for $\bgm \supseteq T$ and
using \eqref{numerator-R2}, 
$ (1-R^2_T)/(1-R_{\bgm}^2)$ is given by 
$ \|\bm{Q}_T\check{\bep}\|^2/\|\bm{Q}_{\bgm}\check{\bep}\|^2$.
Further we easily have
\begin{equation*}
1 \leq \frac{1-R^2_T}{1-R_{\bgm}^2}= 
\frac{\|\bm{Q}_T\check{\bep}\|^2}{\|\bm{Q}_{\bgm}\check{\bep}\|^2}
\leq 
\frac{\|\check{\bep}\|^2}{\|\bm{Q}_{\bgm}\check{\bep}\|^2}
=\frac{1}{W_{\bgm}}.
\end{equation*}
Note $W_{\bgm} \sim Be(\{n-q_{\bgm}-1\}/2, q_{\bgm}/2)$ 
is distributed as $(1+\chi^2_{q_{\bgm}}/\chi^2_{n-q_{\bgm}-1})^{-1}$
where $ \chi^2_{n-q_{\bgm}-1}$ and $\chi^2_{q_{\bgm}}$ are independent.
Hence
\begin{equation*}
\begin{split}
&\left\{ 1+\chi^2_{q_{\bgm}}/\chi^2_{n-q_{\bgm}-1}\right\}^{-n} 
= \left\{ 1+\left\{n/\chi^2_{n-q_{\bgm}-1}\right\} 
\left\{\chi^2_{q_{\bgm}}/n\right\} \right\}^{-n} \\
& \qquad \sim \exp(-\chi^2_{q_{\bgm}}) \mbox{ as }n \to \infty
\end{split}
\end{equation*}
since $ \chi^2_{n-q_{\bgm}-1}/n \to 1$ in probability.
Therefore $ W_{\bgm}^{-n}$ is bounded in probability from above
and hence the theorem follows.

\medskip
$ (1-R^2_T)/(1-R_{\bgm}^2) $ is written as
\begin{equation}
\begin{split}
&  \frac{1-R^2_T}{1-R_{\bgm}^2} 
 = \frac{\sigma_T^2\|\bm{Q}_T\check{\bep}\|^2}
{\bbe'_T\bm{X}'_T\bm{Q}_{\bgm} \bm{X}_T\bbe_T + 2\sigma_T\bbe'_T\bm{X}'_T\bm{Q}_{\bgm}\bep+
\sigma_T^2\check{\bep}'\bm{Q}_{\bgm} \check{\bep}} \\
 & \leq \frac{\sigma_T^2\|\check{\bep}\|^2}
{\bbe'_T\bm{X}'_T\bm{Q}_{\bgm} \bm{X}_T\bbe_T + 2\sigma_T\bbe'_T\bm{X}'_T\bm{Q}_{\bgm}\bep+
\sigma_T^2\check{\bep}'\bm{Q}_{\bgm} \check{\bep}} \\
& = \left( \frac{\bbe'_T \{\bm{X}'_T\bm{Q}_{\bgm} \bm{X}_T/n\}\bbe_T 
+ 2\sigma_T\bbe'_T\{\bm{X}'_T\bm{Q}_{\bgm}\bep/n\}}{\sigma_T^2 V_n}+ W_{\bgm} \right)^{-1}.
\end{split}
\end{equation}
Clearly $W_{\bgm} \to 1$ in probability. Also since
$\bgm \nsupseteq T$, 
$\bbe'_T \{\bm{X}'_T\bm{Q}_{\bgm} \bm{X}_T/n\}\bbe_T >0 $ for any $n$.
Further as $ \{\bm{X}'_T\bm{Q}_{\bgm}\bep/n\} \to \bm{0}$ in probability, 
$ (1-R^2_T)/(1-R_{\bgm}^2) $ is strictly smaller than $1$ in probability.

\end{document}